%
\documentclass[runningheads]{llncs}
\usepackage{cite}
\usepackage{graphicx}
%

\usepackage{booktabs}   
\usepackage{subfig}
\usepackage{wrapfig}
\usepackage{multicol}
\usepackage{amsmath}
\usepackage{amsfonts}
\usepackage{tikz}
\usetikzlibrary{arrows.meta,automata,calc}
\usepackage{tikz-qtree} 
\usepackage{tikz-cd} 
\usepackage{algorithm}
\usepackage[noend]{algpseudocode}

\usepackage{syntax}

\usepackage{semantic}

\usepackage{marginnote}

\usepackage{listings}
\lstset{
  basicstyle=\ttfamily,
  basewidth={.5em,.5em},
}
\lstdefinelanguage{syncon}
  {morekeywords={syncon,infix,postfix,prefix,left,right,comment,precedence,forbid,except,type,builtin,rec,grouping,token},
   sensitive=true,
   morecomment=[l]{//},
   morecomment=[s]{/*}{*/},
   morecomment=[s][<.][.>],
   morestring=[b]",
  }
\newcommand{\ocaml}{\lstinline[language={[objective]caml}]}
\newcommand{\syncon}{\lstinline[language=syncon]}
\newcommand{\haskell}{\lstinline[language=haskell]}


\newcommand{\NT}{V} 
\newcommand{\T}{\Sigma} 
\newcommand{\Labels}{L} 
\newcommand{\yield}{\mathit{yield}} 
\newcommand{\semantic}{\mathit{unparen}} 
\newcommand{\parse}{\mathit{parse}} 
\newcommand{\words}{\mathit{words}} 
\newcommand{\amb}{\mathit{amb}}


\newcommand{\regex}{\mathit{Reg}}
\newcommand{\reqpl}{(}
\newcommand{\reqpr}{)}
\newcommand{\reqp}[1]{\reqpl#1\reqpr}
\newcommand{\pospl}{[}
\newcommand{\pospr}{]}
\newcommand{\posp}[1]{\pospl#1\pospr}
\newcommand{\dfa}{\mathit{dfa}} 
\newcommand{\labelnt}{\mathit{nt}} 
\newcommand{\labelregex}{\mathit{regex}} 
\newcommand{\treetorun}{\mathit{run}}
\newcommand{\runtotree}{\mathit{tree}}
\newcommand{\runtoword}{\mathit{word}}
\newcommand{\runprimetorun}{\mathit{normalrun}}
\newcommand{\shortest}{\mathit{shortest}}

\synctex=1
\begin{document}

\title{Resolvable Ambiguity\thanks{This project is financially supported by the Swedish Foundation for Strategic Research (FFL15-0032)}}
%
%
\author{Viktor Palmkvist\inst{1} \and
Elias Castegren\inst{1} \and
Philipp Haller\inst{1} \and
David Broman\inst{1}}
%
\authorrunning{V. Palmkvist et al.}
%
\institute{KTH Royal Institute of Technology, 100 44 Stockholm, Sweden
\email{\{vipa,eliasca,dbro,phaller\}@kth.se}
}
\maketitle              
\begin{abstract}
A common standpoint when designing the syntax of programming languages is that
the grammar definition has to be unambiguous.  However, requiring up front unambiguous grammars
can force language designers to make more or less arbitrary choices to disambiguate
the language. In this paper, we depart from the traditional view of
unambiguous grammar design, and enable the detection of ambiguities to be delayed until
parse time, allowing the user of the language to perform the disambiguation. A
natural decision problem follows: given a language definition, can a
user always disambiguate an ambiguous program?  We introduce and
formalize this fundamental problem---called the \emph{resolvable
  ambiguity problem}---and divide it into separate static and dynamic
resolvability problems. We provide solutions to the static problem
for a restricted language class and sketch proofs of soundness and completeness. We also provide a sound
and complete solution to the dynamic problem for a much less
restricted class of languages.
The approach is evaluated through two separate case studies,
covering both a large existing programming language, and the
composability of domain-specific languages.
\keywords{Syntax \and Ambiguity \and Grammars \and Parsing \and Domain-specific languages}
\end{abstract}

\section{Introduction}

Ever since the early 60s, it has been known that determining whether a context-free
grammar is ambiguous is undecidable~\cite{cantorAmbiguityProblemBackus1962}. As a consequence,
a large number of restricted grammars have been developed to guarantee
that language definitions are unambiguous. This traditional view of
only allowing unambiguous grammars has been taken for granted as the
only truth: \emph{the} way of how the syntax of a language must be
defined ~\cite{sudkampLanguagesMachinesIntroduction1997,ahoCompilersPrinciplesTechniques2006,webberModernProgrammingLanguages2003,cooperEngineeringCompiler2011,ginsburgAmbiguityContextFree1966}.
However, our recent work~\cite{palmkvistCreatingDomainSpecificLanguages2019} suggests that
carefully designed ambiguous syntax definitions can be
preferable compared to completely unambiguous
definitions.

Another area where ambiguous grammars naturally arise is
in domain-specific language development.  An important
objective when designing domain-specific languages is to be able to construct
languages by composing and extending existing
languages.
Unfortunately, the composition of
%
grammars easily produces
an ambiguous grammar, even if the original grammars are all
unambiguous on their own. Approaches for solving this problem can be broadly split
into two categories: those that handle ambiguities on the
grammar-level (detection, prevention, etc.), and those that work on
particular examples of ambiguous programs. The former is well explored
in the form of heuristics for ambiguity detection
\cite{bastenAmbiguityDetectionProgramming2011,axelssonAnalyzingContextFreeGrammars2008,brabrandAnalyzingAmbiguityContextFree2007}
or restrictions on the composed grammars
\cite{kaminskiModularWellDefinednessAnalysis2013}, while the
latter has received relatively little attention.




This paper focuses on the problems that arise if a grammar definition
is not guaranteed to be unambiguous. Concretely, if parsing a program
could result in several correct parse trees, the grammar is obviously
ambiguous. The programmer then needs to disambiguate the program by,
for instance, inserting extra parentheses. We say that a program is
\emph{resolvably ambiguous} if the ambiguity can be resolved by
an end-user such that each of the possible parse trees can
be selected using different modifications.
We can divide the problem of resolvable ambiguity into two
different decision problems.
For a particular language, the \emph{dynamic resolvable ambiguity problem} asks the following question: can every parse tree of \emph{one particular} program be written unambiguously, while the \emph{static resolvability problem} asks if every parse tree of \emph{every} program can be written unambiguously. In particular, the latter problem is difficult, since it in general involves examining an infinite set (one per parse tree) of infinite sets (words that can parse as that parse tree) to find one without a unique element (i.e., an unambiguous word). Furthermore we must be able to produce an unambiguous program for a given parse tree to be able to give good error messages to an end-user encountering an ambiguity.

In this paper we describe a syntax definition formalism where ambiguities can be resolved by grouping parentheses, and precedence and associativity is specified through \emph{marks} (c.f. Section~\ref{sec:parse-time-disambiguation}). As a delimitation we only consider languages with balanced parentheses. Our approach builds on the following key insights: i) the language of a parse tree is visibly pushdown \cite{alurVisiblyPushdownLanguages2004}, ii) these languages can be encoded as words, and iii) the language of the word encodings of all trees for a language is also visibly pushdown. The last point allows the construction of a visibly pushdown automaton (VPDA) that we examine to solve the static problem for a particular language subclass. Our solution to the dynamic problem, which builds on the first point, is general.

A language designer can thus either use our solution to the static problem (if their language is in the appropriate subclass) or use an approach similar to unit testing through our solution to the dynamic problem. Our solution to the dynamic problem also gives suggested fixes to end-users encountering ambiguities.

More concretely, we make the following contributions:

\begin{itemize}
\item We formally define the term \emph{resolvable ambiguity}, and
  provide precise formalizations of the static and dynamic
  \emph{resolvable ambiguity problems}
  (Section~\ref{sec:resolvable-definition}).
\item As part of the formalization of the dynamic and static resolvability problems, we define a syntax definition formalism based on Extended Backus–Naur Form (EBNF) where ambiguities can be resolved with grouping parentheses (Section~\ref{sec:parse-time-disambiguation}).
\item We describe the language of a single, arbitrary parse tree and present a novel linear encoding of the same (Section~\ref{sec:linear-encoding}).
\item We devise a decidable algorithm that solves the \emph{dynamic resolvability problem} for languages with balanced parentheses (Section~\ref{sec:dynamic}). Additionally, this algorithm also produces a minimal unambiguous word for each resolvable parse tree.
\item We solve the \emph{static resolvability problem} for a language subclass. We further show that a resolvable result can be preserved through certain modifications that bring the language into a larger subclass (Section~\ref{sec:static}).
\item We evaluate the dynamic resolvability methodology in the context of two
  different case studies using a tool and a small DSL for defining
  syntax definitions (Section~\ref{sec:evaluation}): (i) a
  domain-specific language for orchestrating parallel
  computations, where we investigate the effects of composing
  language fragments, and (ii) an implementation and evaluation of
  a large subset of OCaml's syntax, where we study the effect of
  implementing an under-specified language definition.
\end{itemize}

\noindent Before presenting the above contributions, the paper starts with motivating examples (Section~\ref{sec:motivation}), as well as preliminaries (Section~\ref{sec:prel}). Finally, following a discussion of related work (Section~\ref{sec:related-work}), the paper provides some conclusions (Section~\ref{sec:conclusion}).

\section{Motivating Examples}
\label{sec:motivation}
In this section, we motivate the need for ambiguous language definitions, where the decision of how to disambiguate a program is taken by the end-user (the programmer) and not the language designer. We motivate our new methodology both for engineering of new domain-specific languages, as well as for the design of existing general-purpose programming languages.

\subsection{Domain-Specific Modeling of Components}
Suppose we are defining a new textual modeling language, where components are composed in series using an infix operator \verb~--~, and in parallel using an infix operator \verb~||~. For instance, an expression \verb~C1 -- C2~ puts components \verb~C1~ and \verb~C2~ in series, whereas \verb~C1 || C2~ composes them in parallel. In such a case, what is then the meaning of the following expression?
\begin{verbatim}
  C1 -- C2 || C3 || C4
\end{verbatim}

\noindent Are there natural associativity and precedence rules for these new operators? If there are no predefined rules of how to disambiguate this expression within the language definition, it is an ambiguous expression, and a parser generates a set of parse trees. Consider Figure~\ref{fig:circuits} which depicts four different alternatives, each with a different meaning, depending on how the ambiguity has been resolved. Clearly, the expression has totally different meanings depending on how the end-user places the parentheses. However, if a language designer is forced to make the grammar of the syntax definition unambiguous, a specific choice has to be made for precedence and associativity. For instance, assume that the designer makes the arbitrary choice that serial composition has higher precedence than parallel composition, and that both operators are left-associative. In such a case, the expression without parentheses is parsed as Figure~\ref{fig:circuits}(d). The question is why such an arbitrary choice---which is forced by the traditional design of unambiguous grammars---is the correct way to interpret a domain-specific expression. The alternative, as argued for in this paper, is to postpone the decision, and instead give an error message to the end-user (programmer or modeler), and expose different alternatives that disambiguate the expression.

\begin{figure*}[!t]
\center
\includegraphics[width=1.0\textwidth]{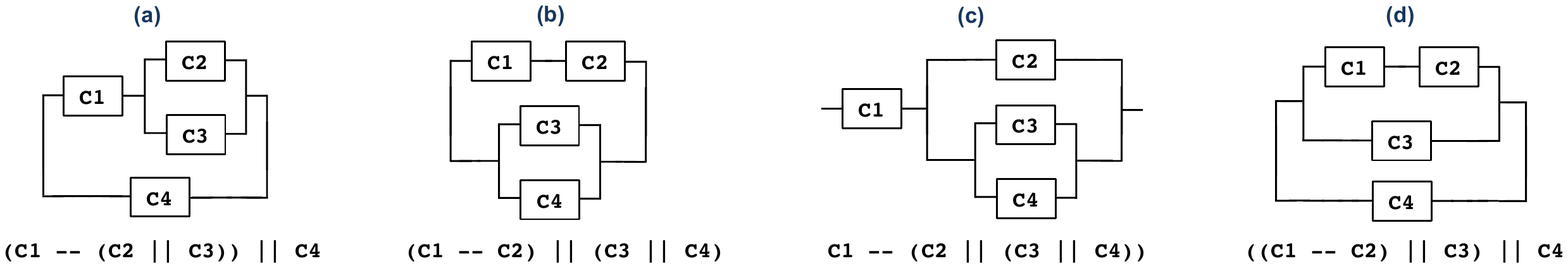}
\caption{The figure shows four different alternatives for disambiguating the expression $\texttt{C1 -- C2 || C3 || C4}$. Note that there is a fifth alternative $\texttt{C1 -- ((C2 || C3) || C4)}$. However, the meaning of this expression is the same as (c) assuming the parallel operator is associative. In such a case, this expression and the expression in (c) both mean that components $\texttt{C2}$, $\texttt{C3}$, and $\texttt{C4}$ are composed in parallel. }
\label{fig:circuits}
\end{figure*}


\subsection{Match Cases in OCaml}

Explicit ambiguity is highly relevant also for the design of new and existing general-purpose programming languages. The following example shows how an ambiguity error can be clearer than what an existing compiler produces today with the traditional approach.


Consider this OCaml example of nested \verb|match| expressions, as stated by \cite{palmkvistCreatingDomainSpecificLanguages2019}:

\begin{minipage}{.34\textwidth}
\begin{lstlisting}[language={[objective]caml},numbers=left,basicstyle=\small\tt]
match 1 with
  | 1 -> match "one" with
         | str -> str
  | 2 -> "two"
\end{lstlisting}
\end{minipage}
\vrule
\begin{minipage}{\textwidth}
\begin{lstlisting}[basicstyle=\small\tt]
 File "./nest.ml", line 4, characters 4-5:
 Error: This pattern matches values of type
        int but a pattern was expected which
        matches values of type string
\end{lstlisting}
\end{minipage}

\noindent The OCaml compiler output is listed to the right. The compiler sees the last line as belonging to the inner \ocaml{match} rather than the outer, as was intended. The solution is simple; we put parentheses around the inner match:

\begin{minipage}{\textwidth}
\begin{lstlisting}[language={[objective]caml},numbers=left,basicstyle=\small\tt]
match 1 with
  | 1 -> (match "one" with
          | str -> str)
  | 2 -> "two"
\end{lstlisting}
\end{minipage}

\noindent However, the connection between the error message and the solution is not particularly clear; surrounding an expression with parentheses does not change its type.
%
%
%
%
%
%
The OCaml compiler makes an arbitrary choice to remove the ambiguity, which may or may not be the alternative the user intended. With a parser that is aware of possible ambiguities, the disambiguation can be left to the end-user, with the alternatives listed as part of an error message.



\subsection{Language Composition}

The possibility of composing different languages is a prevalent idea in research \cite{vanwykSilverExtensibleAttribute2010,lorenzenSoundTypedependentSyntactic2016,danielssonParsingMixfixOperators2011}. Particularly relevant to this paper is the behavior of composed \emph{syntax}; composing unambiguous syntaxes does not necessarily produce an unambiguous composed syntax. Pre-existing systems commonly solve this either by strict requirements on the individual languages \cite{vanwykSilverExtensibleAttribute2010} or by largely ignoring the problem and merely presenting the multiple syntax trees in case of an ambiguity \cite{lorenzenSoundTypedependentSyntactic2016,danielssonParsingMixfixOperators2011}. The former case gives good guarantees for the end-user experience (no ambiguities) while the latter gives more freedom to the language designer. Resolvable ambiguity, as presented in this paper, provides a middle ground with more language designer freedom than the former and a different guarantee for the end-user: no \emph{unresolvable} ambiguities.

Language composition has limited prevalence outside of research, but it does exist in a simplified form: custom operators in libraries. For example, Haskell allows a programmer to define new operators, with new syntax and custom precedence level and associativity. A user can then import these operators from any number of libraries and use them together in one program. Precedence in Haskell is defined as a number between 0 and 9, meaning that operators from different libraries have a defined relative precedence. If the libraries are designed by different authors then this precedence is unlikely to have been considered and may or may not be sensible. Leaving the relative precedence undefined would not affect expressivity, every possible combination of operators can be written by simply adding parentheses, and would avoid surprising interpretations. This does in fact happen in some cases in Haskell, namely when two operators have the same precedence but incompatible associativity. Both \haskell{>>=} and \haskell{=<<} have precedence 1, but the former is left-associative while the latter is right-associative. \haskell{a >>= b =<< c} then produces the following error:

\begin{lstlisting}
Precedence parsing error
  cannot mix '>>=' [infixl 1] and '=<<' [infixr 1]
  in the same infix expression
\end{lstlisting}

\noindent The error message shows the problem but suggests no solution. Our approach can suggest solutions, allow arbitrary operators to have undefined relative precedence (intransitive precedence), and is general; it handles non-operator ambiguities as well.

\section{Notation}
\label{sec:prel}

This section briefly explains the notations used in this paper.
Appendix~\ref{app:prel} contains a more detailed description of
the preliminaries.

\paragraph{Grammars}
Given a context-free grammar $G$, we write $L(G)$ to denote the
set of all words recognized by $G$.
The standard definition of ambiguity states that a word
$w \in L(G)$ is ambiguous if there are two distinct leftmost
derivations of that word.

\paragraph{Automata}
An automaton is a 5-tuple $(Q,\T,\delta,q_0,F)$, where $Q$ is a
finite set of states; $\T$ a finite set of terminals; $\delta$ a
transition function from $Q \times \T$ to finite subsets of $Q$
(or single states of $Q$ for a deterministic automaton);
$q_0 \in Q$ an initial state; and $F \subseteq Q$ a set of final
states.
A pushdown automata further adds a set of stack symbols $\Gamma$
and equips the transition function $\delta$ with information on
which symbols are pushed, popped or just read from the stack.
%

\paragraph{Unranked Regular Tree Grammars}
Trees generalize words by allowing each terminal to have multiple
ordered successors, instead of just zero or one. In this paper, we
are only concerned with \emph{unranked} trees, where the arity of
a terminal is not fixed.
The allowed sequence of ordered successors is described by a regular language~\cite{comonTreeAutomataTechniques2007}. For
example, $a(n(\verb|'1'|) \;\verb|'+'|\; n(\verb|'2'|))$ represents
the parse tree of the string ``$1+2$''.
Finally, $yield : L(T) -> \T^{*}$ is the sequence of terminals
obtained by a left-to-right traversal of the leaves of a tree. Informally, it is
the flattening of a tree after all internal nodes have been
removed.

\section{Resolvable Ambiguity} \label{sec:resolvable-definition}

This section introduces our definition of \emph{resolvable ambiguity}, and then relates it to standard concepts in formal languages.

A formal language is defined as a set of words, i.e., a subset of $\T^{*}$ for some alphabet $\T$. To be able to define \emph{resolvable ambiguity}, we additionally have to consider the results of parsing words. In order to stay as general as possible, we first define the notion of an \emph{abstract parser}.

\begin{definition}
  An abstract parser $P$ is a triple $(\T, T, \parse)$ consisting of
\begin{itemize}
\item an alphabet $\T$,
\item a set of parse trees $T$, and
\item a function $\parse : \T^{*} -> 2^T$ that relates words to their parse trees, where $2^T$ denotes the powerset of $T$. Additionally, we require that $T = \bigcup_{w \in \T^{*}} \parse(w)$.
\end{itemize}
\end{definition}

\noindent Note that we do not require each tree to have a unique word, i.e., there may exist two distinct $w_1$ and $w_2$ such that $t \in \parse(w_1)$ and $t \in \parse(w_2)$. This notion of an abstract parser enables the introduction of a particular class of formal languages that we will use throughout the rest of the paper; we call members of this class \emph{parse languages:}

\begin{definition}
  Given a word $w \in \T^{*}$ and an abstract parser $P = (\T, T,
  \parse)$, the set of words contained in the \emph{parse language}
  $L(P)$ is defined as follows:

  $w \in L(P)$ iff $\parse(w) \neq \emptyset$.
\end{definition}

\noindent For example, consider a simple arithmetic language without precedence and parentheses. In such a language, $\parse(1 + 2 \cdot 3)$ would produce a set containing two parse trees. 
%
%
The ambiguity of a word $w \in \T^{*}$ is defined in terms of $\parse$:

\begin{definition}
  Given an abstract parser $P = (\T, T, \parse)$, a word $w \in L(P)$
  is ambiguous, written $\amb_P(w)$, iff
  $\exists t_1, t_2 \in T.\ t_1 \neq t_2 \land \{t_1, t_2\} \subseteq \parse(w)$
\end{definition}

\noindent
Note that the above definition implies that a word $w \in L(P)$, where
$P = (\T, T, \parse)$, is not ambiguous, or \emph{unambiguous}, if
$\exists t \in T.\ \parse(w) = \{t\}$.

We can connect the above definition of a parse language to the
classical definition of a (formal) language as follows:

\begin{itemize}
\item Given a parse language $L(P)$, the corresponding classical language (i.e., set of words) is given by $\{ w \mid \parse(w) \neq \emptyset \}$.
\item If we select a $\parse$ function that relates words to their leftmost derivations in a given context-free grammar, then our definition of ambiguity corresponds exactly to the classical definition of ambiguity.
\end{itemize}

\noindent For \emph{resolvable ambiguity} the definition instead centers around parse trees:

\begin{definition}\label{def:resolvable-tree}
  Given an abstract parser $P = (\T, T, \parse)$, a tree $t \in T$ is \emph{resolvably ambiguous}, written $\rho_P(t)$, iff

  $\exists w \in \T^{*}.\ \parse(w') = \{t\}$.
\end{definition}

\noindent A word is then resolvably ambiguous if all its parse trees are resolvably ambiguous:

\begin{definition}\label{def:resolvable-word}
  Given an abstract parser $P = (\T, T, \parse)$, a word $w \in L(P)$ is \emph{resolvably ambiguous}, written $\rho_P(w)$, iff

  $\forall t \in \parse(w).\ \rho_P(t)$.
\end{definition}

\noindent Additionally, we define an \emph{abstract parser} to be resolvably ambiguous if all its parse trees are resolvably ambiguous, formally:

\begin{definition}\label{def:resolvable-language}
  An abstract parser $P$ is resolvably ambiguous, written $\rho(P)$, iff

  $\forall t \in T.\ \rho_P(t)$.
\end{definition}

\noindent We can now make the following observations:

\begin{itemize}
\item An unambigous word $w$ is trivially resolvably ambiguous, since its only parse tree $t$ can be written unambiguously with $w$ itself ($\parse(w) = \{t\}$). The set of resolvably ambiguous words is thus a superset of the unambiguous words.
\item If a given parse tree $t$ has only one word $w$ such that $t \in \parse(w)$, then $w$ is resolvably ambiguous iff it is unambiguous. In general, $\forall t \in T.\ \lvert\{w \mid t \in \parse(w)\}\rvert = 1$ implies that the set of resolvably ambiguous words is exactly the set of unambiguous words.
\end{itemize}

\noindent The second point suggests that resolvable ambiguity is only an interesting property if an element of $T$ does not uniquely identify an element of $\T^{*}$. Intuitively, this only happens if $\parse$ discards some information present in its argument when constructing an individual parse tree. Fortunately, this is generally true for parsing in commonly used programming languages; they tend to discard, e.g., grouping parentheses and whitespace. In general, whatever information $\parse$ discards can be used by an end-user to disambiguate an ambiguous program.

We thus propose to loosen the common ``no ambiguity'' restriction on programming language grammars, and instead only require them to be resolvably ambiguous. However, merely having an arbitrary function $\parse$ gives us very little to work with, and no way to decide whether the language it defines is resolvably ambiguous or not. The remainder of this paper will thus consider $\parse$ functions defined using a particular formalism, introduced in Section~\ref{sec:parse-time-disambiguation}, which gives us some decidable properties.

Before introducing this formalism, however, we introduce the two central problems we consider in this paper:

\begin{description}
\item[Static resolvability.] Given an abstract parser $P$, determine whether $\rho(P)$.
\item[Dynamic resolvability.] Given an abstract parser $P$ and a word $w \in L(P)$, determine whether $\rho_P(w)$.
\end{description}

\noindent Our main concern is producing decision procedures for these problems. First, we define soundness and completeness for the static problem.

\begin{definition}[Soundness of Static Resolvability]\label{def:static-procedure-sound}
  A decision procedure $f$, solving the static resolvability problem, is sound iff

  $f(P) \implies \rho(P) \text{ for all abstract parsers } P$
\end{definition}

\begin{definition}[Completeness of Static Resolvability]\label{def:static-procedure-complete}
  A decision procedure $f$, solving the static resolvability problem, is \emph{complete} iff

  $\rho(P) \implies f(P) \text{ for all abstract parsers } P$
\end{definition}

\noindent Similarly, we define soundness and completeness for the dynamic problem.

\begin{definition}[Soundness of Dynamic Resolvability]\label{def:dynamic-procedure-sound}
  A decision procedure $f$, solving the dynamic resolvability problem, is sound iff

  $f(P, w) \implies \rho_P(w) \text{ for all } w \in \Sigma^{*} \text{ and abstract parsers } P = (\T, T, \parse)$
\end{definition}

\begin{definition}[Completeness of Dynamic Resolvability]\label{def:dynamic-procedure-complete}
  A decision procedure $f$, solving the dynamic resolvability problem, is complete iff

  $\rho_P(w) \implies f(P, w) \text{ for all } w \in \Sigma^{*} \text{ and abstract parsers } P = (\T, T, \parse)$
\end{definition}

\section{Parse-time Disambiguation} \label{sec:parse-time-disambiguation}

This section describes our chosen language definition formalism, and motivates its design.

The primary purpose of this formalism is, as described in the previous section, to produce a $\parse$ function, i.e., to describe a word language and assign one or more parse trees to each word. Furthermore, Section~\ref{sec:resolvable-definition} suggests that disambiguation is made possible by letting $\parse$ discard information. We use unranked trees as parse trees and have $\parse$ discard grouping parentheses.

With that in mind, we define a language definition $D$ as a set of labelled productions, as described in Fig.~\ref{fig:input-language-definition}. Note that we require the labels to uniquely identify the production, i.e., there can be no two distinct productions in $D$ that share the same label. Also note that the right-hand side of a production is a regular expression, rather than the theoretically simpler sequence used in a context-free grammar.

Each non-terminal appearing in the regular expression of a production carries a \emph{mark} $m$ which is a set of labels whose productions may \emph{not} replace that non-terminal. To lessen clutter, we write $E_\emptyset$ as $E$. Consider the language definition shown in Fig.~\ref{fig:running-example-definition} which we use as a running example. In the production describing multiplication ($m$) both non-terminals are marked with $\{a\}$, which thus forbids addition from being a direct child of a multiplication. By ``direct child'' we mean ``without an intermediate node'', most commonly grouping parentheses; thus, this enforces conventional precedence.

\begin{figure}[b]

\centering
\begin{minipage}{.47\textwidth}
  \centering
  \begin{tabular}{@{}ll@{}}
      Terminals & $t \in \T$ \\
      Non-terminals & $N \in \NT$ \\
      Labels & $l \in \Labels$ \\
      \multicolumn{2}{@{}l@{}}{$\T$, $\NT$, and $\Labels$ disjoint} \\
      \addlinespace
      Marks & $m \subseteq \Labels$ \\
      Regular expressions & $r ::= t \mid N_m \mid r \cdot r $ \\
      & \hphantom{$r ::=$}$\mid r + r \mid \epsilon \mid r^{*}$ \\
      Labelled productions & $N -> l : r$ \\
  \end{tabular}
  \captionof{figure}{The abstract syntax of a language definition.}
  \label{fig:input-language-definition}
\end{minipage}\quad
\begin{minipage}{.45\textwidth}
  \centering
  \begin{tabular}{@{}l@{\quad$->$\quad}c@{ $:$\quad}l@{}}
    $E$ & $l$ & \verb|'['| ($E$ (\verb|';'| $E$)$^{*}$ $+$ $\epsilon$) \verb|']'| \\
    $E$ & $a$ & $E$ \verb|'+'| $E$ \\
    $E$ & $m$ & $E_{\{a\}}$ \verb|'*'| $E_{\{a\}}$ \\
    $E$ & $n$ & $I$ \\
  \end{tabular}
  \captionof{figure}{The input language definition used as a running example, an expression language with lists, addition, and multiplication, with precedence defined, but not associativity. Assumes that $I$ matches a numeric terminal.}
  \label{fig:running-example-definition}
\end{minipage}
\end{figure}

From $D$ we then generate four grammars: $G_D$, $T_D$, $G'_D$, and $T'_D$. Technically, only $G'_D$ and $T_D$ are required, $G'_D$ is used as the defined word language and $T_D$ as the parse trees, but the remaining two grammars help the presentation.

\begin{itemize}
\item $G_D$ represents a word language describing all semantically distinct programs.
\item $T_D$ represents a tree language describing the parse trees of words in $L(G_D)$.
\item $G'_D$ is essentially a modified version of $G_D$, e.g., adding grouping parentheses and other forms of disambiguation (i.e., the result of marks).
\item $T'_D$ represents a tree language describing the parse trees of words in $L(G'_D)$.
\end{itemize}

\noindent Fig.~\ref{fig:running-example-generated} shows the four grammars generated from our running example in Fig.~\ref{fig:running-example-definition}. The context-free grammars are produced by a rather standard translation from regular expressions to CFGs, while the primed grammars get a new non-terminal per distinctly marked non-terminal in $D$, where each new non-terminal only has the productions whose label is not in the mark. For example, the non-terminal $E_{\{a\}}$ in Fig.~\ref{fig:running-example-generated:t-prime} has no production corresponding to the $a$ production in Fig.~\ref{fig:running-example-definition}.

\begin{figure}[t]
  \subfloat[$T_D$, the parse trees of $G_D$.\label{fig:running-example-generated:t}]{%
    \begin{minipage}{0.5\textwidth}
    \centering
    \begin{tabular}{@{}l@{\quad$->$\quad}c@{$($ }l@{}}
      \toprule
      $E$ & $l$ & \texttt{'['} $(\epsilon + E($\texttt{';'} $E)^{*})$ \texttt{']'} $)$ \\
      $E$ & $a$ & $E$ \texttt{'+'} $E$ $)$ \\
      $E$ & $m$ & $E$ \texttt{'*'} $E$ $)$ \\
      $E$ & $n$ & $I$ $)$ \\
      \bottomrule
    \end{tabular}
    \end{minipage}
  }
  \subfloat[$T'_D$, the parse trees of $G'_D$.\label{fig:running-example-generated:t-prime}]{%
    \begin{minipage}{0.5\textwidth}
    \centering
    \begin{tabular}{@{}l@{\quad$->$\quad}c@{$($ }l@{}}
      \toprule
      $E$ & $l$ & \texttt{'['} $(\epsilon + E($\texttt{';'} $E)^{*})$ \texttt{']'} $)$ \\
      $E$ & $a$ & $E$ \texttt{'+'} $E$ $)$ \\
      $E$ & $m$ & $E_{\{a\}}$ \texttt{'*'} $E_{\{a\}}$ $)$ \\
      $E$ & $n$ & $I$ $)$ \\
      $E$ & $g$ & \texttt{'('} $E$ \texttt{')'} $)$ \\
      \midrule
      $E_{\{a\}}$ & $l$ & \texttt{'['} $(\epsilon + E($\texttt{';'} $E)^{*})$ \texttt{']'} $)$ \\
      $E_{\{a\}}$ & $m$ & $E_{\{a\}}$ \texttt{'*'} $E_{\{a\}}$ $)$ \\
      $E_{\{a\}}$ & $n$ & $I$ $)$ \\
      $E_{\{a\}}$ & $g$ & \texttt{'('} $E$ \texttt{')'} $)$ \\
      \bottomrule
    \end{tabular}
    \end{minipage}
  }

  \subfloat[$G_D$, the generated abstract syntax.\label{fig:running-example-generated:w}]{%
    \begin{minipage}{0.5\textwidth}
    \centering
    \begin{tabular}{@{}l@{\quad$->$\quad}l@{}}
      \toprule
      $E$ & \texttt{'['} $E_{l1}$ \texttt{']'} \\
      $E$ & $E$ \texttt{'+'} $E$ \\
      $E$ & $E$ \texttt{'*'} $E$ \\
      $E$ & $I$ \\
      \midrule
      $E_{l1}$ & $\epsilon$ \\
      $E_{l1}$ & $E$ $E_{l2}$ \\
      \midrule
      $E_{l2}$ & $\epsilon$ \\
      $E_{l2}$ & \texttt{';'} $E$ $E_{l2}$ \\
      \bottomrule
    \end{tabular}
    \end{minipage}
  }
  \subfloat[$G'_D$, the generated concrete syntax.\label{fig:running-example-generated:w-prime}]{%
    \begin{minipage}{0.5\textwidth}
    \centering
    \begin{tabular}{@{}l@{\quad$->$\quad}l@{}}
      \toprule
      $E$ & \texttt{'['} $E_{l1}$ \texttt{']'} \\
      $E$ & $E$ \texttt{'+'} $E$ \\
      $E$ & $E_{\{a\}}$ \texttt{'*'} $E_{\{a\}}$ \\
      $E$ & $I$ \\
      $E$ & \texttt{'('} $E$ \texttt{')'} \\
      \midrule
      $E_{\{a\}}$ & \texttt{'['} $E_{l1}$ \texttt{']'} \\
      $E_{\{a\}}$ & $E_{\{a\}}$ \texttt{'*'} $E_{\{a\}}$ \\
      $E_{\{a\}}$ & $I$ \\
      $E_{\{a\}}$ & \texttt{'('} $E$ \texttt{')'} \\
      \midrule
      $E_{l1}$ & $\epsilon$ \\
      $E_{l1}$ & $E$ $E_{l2}$ \\
      \midrule
      $E_{l2}$ & $\epsilon$ \\
      $E_{l2}$ & \texttt{';'} $E$ $E_{l2}$ \\
      \bottomrule
    \end{tabular}
    \end{minipage}
  }
  \caption{The generated grammars.}
  \label{fig:running-example-generated}
\end{figure}

Examples of elements in each of these four languages can be seen in Fig.~\ref{fig:example-words-in-square}. Each element corresponds to the word ``$(1 + 2) * 3$'' in $L(G'_D)$. Note that the word in $L(G_D)$ is ambiguous, and that there are other words in $L(G'_D)$ that correspond to the same element in $L(T_D)$, e.g., ``$((1 + 2)) * 3$'' and ``$(1 + 2) * (3)$''. As a memory aid, the primed versions ($G'_D$ and $T'_D$) contain disambiguation (grouping parentheses, precedence, associativity, etc.) while the unprimed versions ($G_D$ and $T_D$) are the (likely ambiguous) straightforward translations (i.e., ignoring marks) from $D$.

{
\newcommand{\terminal}[1]{\ \underline{#1}\ }

\begin{figure}
  \subfloat[Example tree in $L(T_D)$.\label{fig:sum-add-tree}]{
    \begin{minipage}{.5\linewidth}
      \centering
      $m(a(n(\terminal{1}) \terminal{+} n(\terminal{2})) \terminal{*} n(\terminal{3}))$
    \end{minipage}
  }
  \subfloat[Example tree in $L(T'_D)$.]{
    \begin{minipage}{.5\linewidth}
      \centering
      $m(g( \terminal{(} a(n( \terminal{1} ) \terminal{+} n( \terminal{2} )) \terminal{)} ) \terminal{*} n( \terminal{3} ))$
    \end{minipage}
  }

  \subfloat[Example word in $L(G_D)$.]{%
    \begin{minipage}{.5\linewidth}
      \centering
      $1 + 2 * 3$
    \end{minipage}
  }
  \subfloat[Example word in $L(G'_D)$.]{%
    \begin{minipage}{.5\linewidth}
      \centering
      $(1 + 2) * 3$
    \end{minipage}
  }
  \caption{Example with elements from each generated language that correspond to each other. The leaf terminals in the tree languages appear underlined to distinguish the two kinds of parentheses.}
  \label{fig:example-words-in-square}
\end{figure}
}

\begin{wrapfigure}{l}{0.3\textwidth}
  \begin{tikzcd}
    L(T_D) \arrow[d, "\yield"] & L(T'_D) \arrow[l, "\semantic"'] \arrow[d, "\yield"] \\
    L(G_D) & L(G'_D)
  \end{tikzcd}
  \caption{The generated grammars, and their relation to each other.}
  \label{fig:grammar-square}
\end{wrapfigure}

At this point we also note that the shape of $D$ determines where the final concrete syntax permits grouping parentheses; they are allowed exactly where they would surround a complete production. For example, $G_D$ in Fig.~\ref{fig:running-example-generated:w} can be seen as a valid language definition (if we generate new unique labels for each of the productions). However, starting with that language definition would allow the expression ``$[1(;2)]$'', which makes no intuitive sense; grouping parentheses should only be allowed around complete expressions, but ``$;2$'' is not a valid expression.

Finally, we require a function $\semantic : L(T'_D) -> L(T_D)$ that removes grouping parentheses from a parse tree, i.e., it replaces every subtree $g($ \verb|'('| $t$ \verb|')'| $)$ with $t$. The relation between the four grammars in terms of $\yield$ and $\semantic$ can be seen in Fig.~\ref{fig:grammar-square}. With this we can define $\parse : L(G'_D) -> 2^{L(T_D)}$, along with its inverse $\words : L(T_D) -> 2^{L(G'_D)}$:

$$
\begin{array}{rcl}
\parse(w) & = & \{ \semantic(t) \mid t \in L(T'_D) \land \yield(t) = w \} \\
\words(t) & = & \{ w \mid t \in \parse(w) \} \\
\end{array}
$$

\noindent The latter is mostly useful in later sections, but $\parse$ allows us to consider some concrete examples of resolvable and unresolvable ambiguities. For example, in our running example (Fig.~\ref{fig:running-example-definition}), the word \verb|'1 + 2 + 3'| is ambiguous, since $\parse($\verb|'1 + 2 + 3'|$) = \{t_1, t_2\}$ where

\begin{center}
  \begin{tabular}{l}
    $t_1 = a($ \hphantom{$a($} $n($ \verb|'1'| $)$ \verb|'+'| $a($ $n($ \verb|'2'| $)$ \hphantom{$)$} \verb|'+'| $n($ \verb|'3'| $)$ $)$ $)$ \\
    $t_2 = a($ $a($ $n($ \verb|'1'| $)$ \verb|'+'| \hphantom{$a($} $n($ \verb|'2'| $)$ $)$ \verb|'+'| $n($ \verb|'3'| $)$ \hphantom{$)$} $)$ \\
  \end{tabular}
\end{center}

\noindent This is a resolvable ambiguity, since $\parse($\verb|'1 + (2 + 3)'|$) = \{t_1\}$ and \\$\parse($\verb|'(1 + 2) + 3'|$) = \{t_2\}$. To demonstrate an unresolvable case, we add the production $E -> s: E$ \verb|';'| $E$ (common syntax for sequential composition), at which point we find that the word \verb|'[1 ; 2]'| is unresolvably ambiguous; $\parse($\verb|'[1 ; 2]'|$) = \{t_3, t_4\}$ where:

\begin{center}
  \begin{tabular}{l}
    $t_3 = l($ \verb|'['| \hphantom{$s($} $n($ \verb|'1'| $)$ \verb|';'| $n($ \verb|'2'| $)$ \hphantom{$)$} \verb|']'| $)$ \\
    $t_4 = l($ \verb|'['| $s($ $n($ \verb|'1'| $)$ \verb|';'| $n($ \verb|'2'| $)$ $)$ \verb|']'| $)$ \\
  \end{tabular}
\end{center}

\noindent In this case, $t_4$ has an unambigous word (namely \verb|'[(1 ; 2)]'|), but $t_3$ does not. The solution is to forbid list elements from being sequences by modifying the language definition in Fig.~\ref{fig:running-example-definition} so that both non-terminals in the production $l$ are marked with $s$ (i.e., they look like $E_{\{s\}}$), at which point $\parse($\verb|'[1 ; 2]'|$) = \{t_3\}$.

\subsection{The Word Language of a Parse Tree} \label{sec:linear-encoding}

Central to our approach is the shape of $\words(t)$. Consider the tree corresponding to \verb|'(1 + 2) * 3'|. Each node in the tree may be surrounded by zero or more parentheses, except \verb|'1 + 2'|, which requires at least one pair. Its language is thus represented by $\{(^{n_1}(^{n_2}(^{n_3}1)^{n_3} + (^{n_4}2)^{n_4})^{n_2} * (^{n_5}2)^{n_5})^{n_1} \mid n_i \in \{0, 1, \ldots\}, n_2 \geq 1\}$. To have something more manageable we introduce an alternative representation: a linear encoding as a word. This representation is convenient for Section~\ref{sec:dynamic} and essential for Section~\ref{sec:static}.

Continuing with the example above, we can encode this language as a word, if we take ``$\reqp{}$'' to mean ``exactly one pair of parentheses'' and ``$\posp{}$'' to mean ``zero or more pairs of parentheses'': ``$\posp{\posp{\reqp{\posp{1} + \posp{2}}} * \posp{3}}$''. This encoding lets us reduce comparisons between languages to comparisons between words, with one caveat: the encoding is not unique. For example, ``$\reqp{\posp{}}$'' encodes the same language as ``$\posp{\reqp{}}$'', and ``$\posp{\posp{}}$'' encodes the same language as ``$\posp{}$''. We rectify this by repeatedly swapping ``$\reqp{\posp{\ldots}}$'' with ``$\posp{\reqp{\ldots}}$'' and ``$\posp{\posp{\ldots}}$'' with ``$\posp{\ldots}$'' until we reach a fixpoint. We call the result the \emph{canonical linear encoding} of the (word) language of a tree in $L(T_D)$, which \emph{is} unique.

We are now ready to construct decision procedures for the static and dynamic resolvability problems, as given in Definitions~\ref{def:static-procedure-sound}, \ref{def:static-procedure-complete}, \ref{def:dynamic-procedure-sound}, and \ref{def:dynamic-procedure-complete}. Section~\ref{sec:static} solves the static problem for a language subclass, while Section~\ref{sec:dynamic} fully solves the dynamic problem.

\section{Dynamic Resolvability Analysis} \label{sec:dynamic}

The dynamic resolvability problem is as follows: for a given $w' \in L(G'_D)$ determine whether $\forall t \in \parse(w').\ \exists w'_2.\ \parse(w'_2) = \{t\}$. Furthermore, for practical reasons, if the word is resolvably ambiguous we wish to produce a (minimal) witness for each tree. We place two restriction on languages $D$ we consider:

\begin{enumerate}
\item Stemming from our initial delimitation, each right-hand side regular expression must only recognize words with balanced parentheses. For example, ``$\verb|(| \verb|)|$'' is permissible, but ``$\verb|(|^{*} \verb|)|^{*}$'' is not.
\item $G_D$, but with parentheses removed, must not be infinitely ambiguous.
\end{enumerate}

\noindent Both of these restrictions can be checked statically. Additionally, if we already know statically that $D$ is resolvably ambiguous\footnote{While the \emph{decision} part of this procedure is uninteresting in this case (it will always answer ``resolvable''), the minimal witness \emph{is} interesting: it shows the user how to resolve the ambiguity.} we can drop the second requirement.

Our approach centers around the construction of a VPDA recognizing $\words(t)$ for any particular $t \in L(T_D)$. We can easily construct this automaton via the canonical linear encoding: let every pair ``$\reqp{}$'' push and pop the same stack symbol (call it $\gamma$) and let every ``$\posp{}$'' push and pop a unique stack symbol. For example, ``$\posp{\reqp{a}\posp{b}}$'' produces the following automaton:

\begin{center}
\begin{tikzpicture}[->,>=Latex,shorten >=1pt,auto,node distance=2cm,
    scale = .9,transform shape]
  \node[state,initial] (1) {};
  \node[state] (2) [right of=1] {};
  \node[state] (3) [right of=2] {};
  \node[state] (4) [right of=3] {};
  \node[state,accepting] (5) [right of=4] {};

  \path (1) edge[loop above] node{$'(', +1$} (1)
  (1) edge node{$'(', +\gamma$} (2)
  (2) edge node{$'a'$} (3)
  (3) edge node{$')', -\gamma$} (4)
  (4) edge[loop above] node{$'(', +2$} (4)
  (4) edge node{$'b'$} (5)
  (5) edge[loop above] node{$')', -2$} (5)
  (5) edge[loop right] node{$')', -1$} (5);
\end{tikzpicture}
\end{center}

\noindent This is useful to us since VPDAs are closed under difference. Given a pair of trees $t_1$ and $t_2$ we can thus produce a pair of automata recognizing $\words(t_1)$ and $\words(t_2)$ respectively, then construct a pair of automata recognizing $\words(t_1) \setminus \words(t_2)$ and $\words(t_2) \setminus \words(t_1)$, respectively. These new automata recognize the words that are not ambiguous between these two particular trees. Our algorithm can thus be seen as Algorithm~\ref{alg:dynamic}.

\begin{algorithm}
\caption{The Dynamic Algorithm}
\label{alg:dynamic}
\begin{algorithmic}[1]
\Procedure{Dynamic}{$T$}
  \State $R <- \emptyset$
  \State $U <- \emptyset$
  \ForAll{$t \in T$}
    \State $T' <- T \setminus \{t\}$
    \State $A <- \words(t)$ \Comment{$A$ is a VPDA.}
    \Loop
      \State $A <- A \setminus \bigcup_{t' \in T'} \words(t')$ \Comment{VPDAs closed under difference and union.}
      \If{$L(A) = \emptyset$}
        \State $U <- U \cup \{t\}$
        \State $\textbf{break loop}$
      \EndIf
      \If{$\parse(\shortest(A)) = \{t\}$} \Comment{$\shortest(A)$ denotes a shortest}
        \State $R <- R \cup \{(t, \shortest(A))\}$ \Comment{word recognized by $A$.}
        \State $\textbf{break loop}$
      \EndIf
      \State $T' <- \parse(\shortest(A)) \setminus \{t\}$
    \EndLoop
  \EndFor
  \State $\textbf{return} (R, U)$
\EndProcedure
\end{algorithmic}
\end{algorithm}

Note that the shortest word we produce might not be unique, in the general case there may be more than one shortest word that disambiguates a tree.

We require two things to ensure termination: that $T'$ is finite, and that the inner loop terminates. $T'$ is finite if no words are infinitely ambiguous, which is ensured by requirement 2. It is also ensured if $D$ is resolvably ambiguous, since an infinite ambiguity requires that $G_D$ contains a cycle $N =>^{+} N$, which would imply that any parse tree containing a production from $N$ only has (infinitely) ambiguous words, i.e., contradiction. Requirement 2 also ensures that the number of trees that share an ambiguous word with $t$ is finite (since each tree corresponds to a left-most derivation in $G_D$ and must share exactly the same non-parenthesis terminals).

\section{Static Resolvability Analysis} \label{sec:static}

To determine if a given language definition $D$ is resolvably ambiguous we attempt to find a counterexample: a tree $t \in L(T_D)$ such that there is no $w \in L(G'_D)$ for which $\parse(w) = \{t\}$, or prove that no such tree exists. Or, more briefly put: find a tree that has only ambiguous words or show that no such tree exists.

Our approach finds this counterexample by finding a pair of trees $t_1$ and $t_2$ such that $\words(t_1) \subseteq \words(t_2)$. Such a pair trivially implies that $t_1$ has no unambiguous words, since $\forall w.\ t_1 \in \parse(w) -> t_2 \in \parse(w)$, which thus implies that the examined language is unresolvably ambiguous. The converse is less obvious, in fact, the absence of such a pair does not guarantee a resolvably ambiguous language in general. We shall however show it to be sufficient for a limited class of languages.

We now divide the possible languages along two axes: whether a language has (non-grouping) parentheses, and whether a language has marks. The parentheses present in the canonical encoding will have the following shapes ($\mathbb{N} = \{0,1,\ldots\}$):

\begin{center}
\begin{tabular}{r@{\quad}c@{\quad}c}
  & Parentheses & No parentheses \\
  \addlinespace
  Marks & $\pospl^1\reqpl^{\mathbb{N}}$ or $\pospl^0\reqpl^{\mathbb{Z}_{+}}$ & $\pospl^1\reqpl^{\mathbb{N}}$ \\
  No marks & $\pospl^1\reqpl^{\mathbb{N}}$ or $\pospl^0\reqpl^{\mathbb{Z}_{+}}$ & $\pospl^1\reqpl^0$ \\
\end{tabular}
\end{center}

\noindent The absence of both non-grouping parentheses and marks mean that every occurrence of a pair of parentheses must permit zero or more pairs. Adding marks adds required pairs for some trees, turning each occurrence to one of $k$ or more pairs, for some non-negative integer $k$ determined by the tree. The left column is more complicated since non-grouping parentheses can introduce occurrences that require an exact number of parentheses, as well as all the complications of marks.

The remainder of this section will be devoted to first showing that $\neg \exists t_1, t_2 \in L(T_D).\ \words(t_1) \subseteq \words(t_2)$ implies that $D$ is resolvably ambiguous if $D$ is in the lower-right quadrant, i.e., if it has no marks and no non-grouping parentheses (in Section~\ref{sec:static-proofs}), and then describing the automaton that forms the basis for our algorithm (Section~\ref{sec:lattice-vpl}). Section~\ref{sec:static-proofs} additionally shows that adding marks to a resolvably ambiguous language in the lower-right quadrant preserves resolvability, which gives us a conservative approach for languages in the top-right quadrant. Intuitively, if a language has marks, but they are not required for resolvability, then we can still detect that the language is resolvable. For example, most expression languages fall in this category (we can write any expression unambiguously by adding parentheses everywhere, even without any precedence or associativity), while lists in OCaml do not (no amount of parentheses will make \verb|'[1;2]'| look like a list of two elements without a mark).

\subsection{The Road to Correctness} \label{sec:static-proofs} 

This section shows that $\neg \exists t_1, t_2 \in L(T_D).\ \words(t_1) \subseteq \words(t_2)$ implies that $D$ is resolvably ambiguous if $D$ has no non-grouping parentheses and no marks. These requirements can be written more formally as follows:

\begin{enumerate}
  \item $\T \cap \{\verb|'('|, \verb|')'|\} = \emptyset$.
  \item Every non-terminal $N_m$ on the right hand side of a productin in $D$ has $m = \emptyset$.
\end{enumerate}

\noindent Together, these imply that the canonical linear encoding of a tree has no ``$\reqp{}$'' (required parentheses), only ``$\posp{}$'' (optional parentheses).

Consider an arbitrary $t \in L(T_D)$ with canonical linear encoding $c$. Now replace each pair ``$\posp{}$'' with exactly one pair of parentheses, calling the result $w_\top$. We have $w_\top \in \words(t)$ by construction. Now consider the two possibilities of ambiguity for $w_\top$:

\begin{description}
  \item[$w_\top$ is ambiguous.] In this case, $\exists t'.\ t \neq t' \land t' \in \parse(w_\top)$. This means that we can take the canonical encoding of $t'$ (call it $c'$) and replace every ``$[]$'' with either one or zero parentheses and produce $w_\top$. But that also means that every word we construct from $c$ (by choosing some non-negative integer of parentheses for each ``$[]$'') can also be constructed from $c'$ (by choosing the same number for corresponding parentheses, and zero for the others), i.e., $\words(t) \subseteq \words(t')$.
  \item[$w_\top$ is unambiguous.] In this case $\words(t)$ has at least one unambiguous word ($w_\top$) which thus cannot be part of $\words(t')$ for any other $t' \in L(T_D)$, thus $\neg \exists t'.\ t \neq t' \land \words(t) \subseteq \words(t')$.
\end{description}

\noindent We thus arrive at the first of two central theorems for our approach:

\begin{theorem}
  Given a language with productions $D$ and terminals $\T$, where $\T \cap \{\verb|'('|, \verb|')'|\} = \emptyset$ and all non-terminals $N_m$ appearing in a right-hand side of a production in $D$ having $m = \emptyset$, the following holds: $\rho_P(D) <-> \neg \exists t_1, t_2 \in L(T_D).\ t_1 \neq t_2 \land \words(t_1) \subseteq \words(t_2)$.
\end{theorem}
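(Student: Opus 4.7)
The plan is to prove the biconditional by treating the two directions separately, with most of the work going into the forward construction from the absence of a containment pair to resolvability.

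The easier direction is the contrapositive of the left-to-right implication: assume there are distinct $t_1, t_2 \in L(T_D)$ with $\words(t_1) \subseteq \words(t_2)$. Then for every $w \in \words(t_1)$ we have $t_1 \in \parse(w)$ by definition, and by the inclusion also $t_2 \in \parse(w)$, so $\{t_1, t_2\} \subseteq \parse(w)$ and $w$ is ambiguous. Hence $t_1$ admits no unambiguous word, so it fails Definition~\ref{def:resolvable-tree}, and the parser for $D$ is therefore not resolvably ambiguous. This direction is essentially immediate from the definitions and does not rely on the two structural hypotheses on $D$.

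For the other direction I would follow the case analysis sketched in the paragraphs immediately preceding the theorem. Fix an arbitrary $t \in L(T_D)$ and let $c$ be its canonical linear encoding. Because $\T$ contains no literal parentheses and every non-terminal in every right-hand side has empty mark, every slot in $c$ must be of the optional form ``$\posp{}$'' (the grid of shapes listed before Section~\ref{sec:static-proofs} collapses to the bottom-right quadrant). Construct the candidate witness $w_\top$ by placing exactly one pair of grouping parentheses at every slot of $c$; then $w_\top \in \words(t)$ by construction. Either $w_\top$ is unambiguous, in which case $t$ is trivially resolvable with witness $w_\top$, or some $t' \ne t$ satisfies $t' \in \parse(w_\top)$, in which case I would derive the containment $\words(t) \subseteq \words(t')$, contradicting the hypothesis and completing the proof.

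The principal obstacle is justifying this containment rigorously. I would establish a lemma stating that, in the present mark-free and parenthesis-free setting, every pair of parentheses appearing in a word $w \in \words(\tau)$ corresponds bijectively to a slot of the canonical encoding of $\tau$ at which the writer chose a positive number of pairs, while any remaining slots contribute zero pairs. Applying this lemma to $w_\top$ regarded as a member of $\words(t')$ yields an injection from the $k$ slots of $c$ (each contributing exactly one pair in $w_\top$) into the slot set of $c'$, since each of the $k$ paren pairs of $w_\top$ must be attributed to exactly one slot of $c'$. Given that injection, any word produced from $c$ by choosing a non-negative multiplicity at each slot can be mimicked from $c'$ by assigning the same multiplicities at the matched slots and zero at the unmatched ones, proving $\words(t) \subseteq \words(t')$. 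Combining both directions yields the stated equivalence.
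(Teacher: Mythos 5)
Your proposal is correct and follows essentially the same route as the paper: the backward direction is the immediate observation that a containment pair $\words(t_1)\subseteq\words(t_2)$ leaves $t_1$ with only ambiguous words, and the forward direction is the paper's own case split on the word $w_\top$ obtained by placing exactly one parenthesis pair at every slot of the canonical encoding, deriving $\words(t)\subseteq\words(t')$ when $w_\top$ is ambiguous. One small wording issue: your auxiliary lemma should be phrased as a unique multiplicity-respecting attribution of parenthesis pairs to slots rather than a bijection (a slot may contribute several pairs in a general word); the injectivity you need holds for $w_\top$ specifically because canonical encodings contain no immediately nested optional pairs, which is exactly the "one or zero parentheses per slot of $c'$" step the paper also asserts.
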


Adding marks to such a language preserves resolvability, since:

\begin{itemize}
  \item Marks can introduce at most one required pair of parentheses per node.
  \item Adding marks to a language can only ever shrink $\words(t)$ for any particular $t \in L(T_D)$.
  \item Given a tree $t$, the word produced by putting one pair of parentheses around each node in $t$ (thus potentially adding double parentheses), call it $w$, is unambiguous in the language without marks. It cannot be excluded from $\words(t)$ by adding marks (since it already has a pair of parentheses around each node), and it also cannot be ambiguous, since that would require $\words(t')$ for some other tree $t'$ to have grown. Thus, $\parse(w) = \{t\}$ even when we add marks to $D$.
\end{itemize}

\begin{theorem}
  Adding marks to a resolvably ambiguous language $D$ preserves resolvability, if $D$ had no parentheses and no marks.
\end{theorem}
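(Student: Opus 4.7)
The plan is to lift resolvability from $D$ to the modified language $D'$ (which is $D$ with marks added) by exhibiting, for each tree of $D'$, a concrete unambiguous witness word. First, I would fix an arbitrary tree $t \in L(T_{D'})$, and observe that since the tree grammars $T_D$ and $T_{D'}$ are generated from the same labelled productions and do not reference marks at all, we have $L(T_D) = L(T_{D'})$; in particular $t$ is a valid tree for $D$ as well. The candidate witness is the word $w$ obtained by surrounding every node of $t$ with exactly one pair of grouping parentheses. Because $D$ has no non-grouping parentheses and no marks, every bracket position in the canonical linear encoding of $t$ is a $\posp{}$, so $w$ is exactly the word $w_\top$ used in the proof of Theorem~1 when applied to $t$ in $D$.

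Second, I would establish that $w$ is unambiguous in $D$. By the dichotomy in the proof of Theorem~1, either $w$ is unambiguous or there exists some $t' \neq t$ with $\words(t) \subseteq \words(t')$ (both taken in $D$); the second case contradicts the assumed resolvability of $D$ via Theorem~1, so $\parse(w) = \{t\}$ when $\parse$ is computed using $D$. Third, I would transfer this unambiguity to $D'$ via two facts. Fact (a): $w$ is still in $\words(t)$ when computed in $D'$, because in the construction of $G'_{D'}$ the grouping production has the form \verb|'('| $E$ \verb|')'| with the fully unmarked non-terminal $E$ inside, regardless of any mark at the grouping position; hence the tree obtained by placing a grouping production around each node of $t$ lies in $L(T'_{D'})$, maps to $t$ under $\semantic$, and yields $w$. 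Fact (b): for every $t' \neq t$, $w$ is not in $\words(t')$ when computed in $D'$, because adding marks only removes productions from $G'_D$ and thus shrinks every $\words(t')$; combining this with the already established fact that $w$ is not in $\words(t')$ when computed in $D$ gives the claim. Together, (a) and (b) give $\parse(w) = \{t\}$ in $D'$, so $t$ is resolvably ambiguous in $D'$, and since $t$ was arbitrary, $D'$ itself is resolvably ambiguous.

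The main obstacle is pinpointing \emph{why} marks cannot sabotage the witness $w$, i.e., fact (a). It requires reading off the construction of $G'_D$ in Section~\ref{sec:parse-time-disambiguation} and verifying that every marked variant of a non-terminal still admits the grouping production, targeting the fully unmarked non-terminal inside the parentheses (as illustrated in Fig.~\ref{fig:running-example-generated:w-prime}, where both $E$ and $E_{\{a\}}$ expand via \verb|'('| $E$ \verb|')'|). Everything else follows directly from Theorem~1 and from the evident monotonicity of the mark construction, with no further combinatorial work required.
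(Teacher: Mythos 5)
Your proposal is correct and follows essentially the same route as the paper's own justification: take the word obtained by placing one pair of grouping parentheses around every node, argue via Theorem~1 and the resolvability of $D$ that it is unambiguous before marks are added, and then observe that marks can only shrink each $\words(t')$ and cannot exclude this fully-parenthesized witness from $\words(t)$. You are somewhat more explicit than the paper on two points it leaves implicit --- why the witness is unambiguous in the unmarked language (the dichotomy from Theorem~1) and why every marked non-terminal still admits the grouping production --- but the underlying argument is the same.
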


\subsection{The Algorithm} \label{sec:lattice-vpl}

This section describes the construction of a VPDA that we examine to determine if a language has a pair of trees where the language of one entirely contains the other. We begin by outlining the approach and then describe it in more detail.

The automaton we examine recognizes canonical linear encodings and has a one-to-one correspondence between runs and trees. Two distinct runs that recognize the same word thus corresponds to two distinct trees that have the \emph{same} word language. This is essentially the question of whether a VPDA is ambiguous, which we can detect by constructing a product automaton and trimming it.

To detect subsumption we make the following observation: given two trees $t_1$ and $t_2$ with canonical encodings $c_1$ and $c_2$, respectively, we have $\words(t_1) \subset \words(t_2)$ iff we can add arbitrary well-nested ``$\posp{}$'' pairs to $c_1$, producing a $c_1'$ such that $c_1' = c_2$ (Note that we do not need to consider ``$\reqp{}$'' since no such pairs appear in this language class). We thus make a product automaton of two slightly different automata, where the second may add arbitrary ``$\posp{}$'', but is otherwise identical.

\subsubsection{Illustration by Example}

We begin by constructing a VPDA that recognizes \emph{a} linear encoding for each tree, then modify it to only recognize the \emph{canonical} linear encoding. To illustrate the approach we consider the following language:

\begin{center}
  \begin{tabular}{@{}r@{$\;->\;$}c@{\,:\;}l}
    \toprule
    $S$ & $a$ & $E(\verb|';'| + \epsilon)$ \\
    $E$ & $b$ & $E\verb|'s'|$ \\
    $E$ & $c$ & \verb|'z'| \\
    \bottomrule
  \end{tabular}
\end{center}

\noindent This language has no non-grouping parentheses and no marks and is thus in the language class we consider. We begin by constructing a DFA per production (using standard methods, since the right hand side is a regular expression).

\begin{center}
\begin{tikzpicture}[->,>=Latex,shorten >=1pt,auto,node distance=2cm,
    scale = .9,transform shape]
  \node[state,initial] (1) {$1$};
  \node[state,accepting] (2) [right of=1] {$2$};
  \node[state,accepting] (3) [right of=2] {$3$};

  \node[state,initial] (4) [below of=1,yshift=7mm] {$4$};
  \node[state] (5) [right of=4] {$5$};
  \node[state,accepting] (6) [right of=5] {$6$};

  \node[state,initial] (7) [below of=4,yshift=7mm] {$7$};
  \node[state,accepting] (8) [right of=7] {$8$};

  \path (1) edge node{$E$} (2)
  (2) edge node{$';'$} (3);

  \path (4) edge node{$E$} (5)
  (5) edge node{$'s'$} (6);

  \path (7) edge node{$'z'$} (8);
\end{tikzpicture}
\end{center}

\noindent We then combine these separate DFAs to produce a single VPDA in three steps:

\begin{enumerate}
\item Introduce two new distinct states (call them $s$ and $f$) and a transition $s \xrightarrow{S} f$, where $S$ is the starting symbol of the language being examined.
\item Replace each transition $p \xrightarrow{N} q$ with:
  \begin{itemize}
  \item A transition $p \xrightarrow{'\pospl', +(p,q)} q_0$ for every initial state $q_0$ in a DFA corresponding to a production with lefthand side $N$.
  \item A transition $q_f \xrightarrow{'\pospr', -(p,q)} q$ for every accepting state $q_f$ in a DFA corresponding to a production with lefthand side $N$.
  \end{itemize}
\item Make $s$ the only initial state, and $f$ the only accepting state.
\end{enumerate}

\noindent To reduce clutter we omit $'\pospl'$ and $'\pospr'$, i.e., we abbreviate $p \xrightarrow{'\pospl', +(p',q')} q$ as $p \xrightarrow{+(p', q')} q$ and $p \xrightarrow{'\pospr', -(p',q')} q$ as $p \xrightarrow{-(p', q')} q$.

\begin{center}
\begin{tikzpicture}[->,>=Latex,shorten >=1pt,auto,node distance=2cm,
    scale = .9,transform shape]
  \node[state] (1) {$1$};
  \node[state] (2) [right of=1] {$2$};
  \node[state] (3) [right of=2] {$3$};

  \node[state] (4) [below of=1] {$4$};
  \node[state] (5) [right of=4] {$5$};
  \node[state] (6) [right of=5] {$6$};

  \node[state] (7) [below of=4] {$7$};
  \node[state] (8) [right of=7] {$8$};

  \node[state,initial] (s) [left of=1,xshift=-0.9cm] {$s$};
  \node[state,accepting] (f) [right of=3,xshift=0.9cm] {$f$};

  \path (2) edge node{$';'$} (3);

  \path (5) edge node{$'s'$} (6);

  \path (7) edge node{$'z'$} (8);

  \path (s) edge node{$+(s, f)$} (1);
  \draw (2.north) -- ([yshift=4mm]2.north) -| (f.north);
  \draw (3.north) -- ([yshift=4mm]3.north) -| (f.north);
  \node at ([yshift=5mm]$(3)!0.5!(f)$) {$-(s,f)$};

  \draw (1.east) -| ([xshift=4mm]4.east) |- ([xshift=4mm,yshift=10mm]7.east) -- (7);
  \draw ([xshift=4mm]4.east) -- (4);
  \draw (6.north) |- ([xshift=-8mm]$(2)!0.5!(5)$) |- (2.west);
  \draw (8) -- ([xshift=-3.5mm,yshift=10mm]8.west) |- (2.west);
  \node at ([xshift=2mm,yshift=2mm]$(1)!0.5!(4)$) {$+(1,2)$};
  \node at ([xshift=-2mm,yshift=2mm]$(2)!0.5!(5)$) {$-(1,2)$};

  \draw (4.south) -- (7.north);
  \path (4) edge[out=270,in=225,looseness=5] node[xshift=4mm]{$+(4,5)$} (4);
  \draw (6.south) -- ([yshift=-5mm]6.south) -| (5.south);
  \path (8) edge[right] node[yshift=-3mm]{$-(4,5)$} (5);
\end{tikzpicture}
\end{center}

\noindent This automaton recognizes a linear encoding by simply putting a pair $'\posp{}'$ around each production. That encoding will be canonical in many cases (e.g. ``$\posp{\posp{\posp{z}s};}$''), but not all; ``$\posp{\posp{z}}$'', which is recognized for the tree $a(c(z))$, is not canonical, it should be ``$\posp{z}$''. The extra $'\posp{}'$ pair stems from $S -> a: E(\verb|';'| + \epsilon)$, its righthand side matches $E$ (i.e., a word consisting of a single non-terminal).

We solve this by first recording which productions have righthand sides that match single non-terminal words (and which non-terminal) and then compute DFAs that do \emph{not} match such words. For our running example we note that the production $a$ matches the word $E$ (we record this as a tuple $(S, a, E)$) and replace its automaton with one that does not match $E$. In this case it is sufficient to mark state $2$ as non-accepting.

The recorded tuple $(S, a, E)$ can be read as ``starting from non-terminal $S$ we can match the word $E$ by choosing the production with label $a$.'' If we consider the middle element as a sequence of labels we can build a relation by adding $(N, \epsilon, N)$ for all non-terminals $N$ and computing the closure of $(A, w, B) \cdot (B, w', C) = (A, w \cdot w', C)$. Call the resulting set $T$.

Finally, we change step 2 in our construction to the following:

\begin{enumerate}
  \setcounter{enumi}{1}
\item Replace each transition $p \xrightarrow{N} q$ with:
  \begin{itemize}
  \item A transition $p \xrightarrow{'\pospl', +(p,q,w)} q_0$, for every initial state $q_0$ in a DFA corresponding to a production with lefthand side $N'$, for every $(N, w, N') \in T$.
  \item A transition $q_f \xrightarrow{'\pospr', -(p,q,w)} q$, for every accepting state $q_f$ in a DFA corresponding to a production with lefthand side $N'$, for every $(N, w, N') \in T$.
  \end{itemize}
\end{enumerate}

\begin{center}
\begin{tikzpicture}[->,>=Latex,shorten >=1pt,auto,node distance=2cm,
    scale = .9,transform shape]
  \node[state] (1) {$1$};
  \node[state] (2) [right of=1] {$2$};
  \node[state] (3) [right of=2] {$3$};

  \node[state] (4) [below of=1] {$4$};
  \node[state] (5) [right of=4] {$5$};
  \node[state] (6) [right of=5] {$6$};

  \node[state] (7) [below of=4] {$7$};
  \node[state] (8) [right of=7] {$8$};

  \node[state,initial] (s) [left of=1,xshift=-0.9cm] {$s$};
  \node[state,accepting] (f) [right of=3,xshift=0.9cm] {$f$};

  \path (2) edge node{$';'$} (3);

  \path (5) edge node{$'s'$} (6);

  \path (7) edge node{$'z'$} (8);

  \path (s) edge node{$+(s, f, \epsilon)$} (1)
  (3) edge node{$-(s, f, \epsilon)$} (f);

  \draw (1.east) -| ([xshift=4mm]4.east) |- ([xshift=4mm,yshift=10mm]7.east) -- (7);
  \draw ([xshift=4mm]4.east) -- (4);
  \draw (6.north) |- ([xshift=-8mm]$(2)!0.5!(5)$) |- (2.west);
  \draw (8) -- ([xshift=-3.5mm,yshift=10mm]8.west) |- (2.west);
  \node at ([xshift=1mm,yshift=2mm]$(1)!0.5!(4)$) {$+(1,2,\epsilon)$};
  \node at ([xshift=-1mm,yshift=2mm]$(2)!0.5!(5)$) {$-(1,2,\epsilon)$};

  \draw (4.south) -- (7.north);
  \path (4) edge[out=270,in=225,looseness=5] node[xshift=4mm]{$+(4,5,\epsilon)$} (4);
  \draw (6.south) -- ([yshift=-5mm]6.south) -| (5.south);
  \path (8) edge[right] node[yshift=-3mm]{$-(4,5,\epsilon)$} (5);

  \draw (s.south) |- (7.west);
  \draw (s.south) |- (4.west);
  \node at ([xshift=8mm,yshift=-10mm]s) {$+(s, f, a)$};
  \draw (8.east) -| (f.south);
  \draw (6.east) -| (f.south);
  \node at ([xshift=-8mm,yshift=-10mm]f) {$-(s, f, a)$};
\end{tikzpicture}
\end{center}

\noindent We call this automaton $A_{\posp{}}$. We also construct a modified automaton that recognizes linear encodings of larger word languages. We call this automaton $A'_{\posp{}}$ and obtain it by adding transitions $p \xrightarrow{'\pospl', +\gamma} p$ and $p \xrightarrow{'\pospr', -\gamma} p$ for every state $p$, where $\gamma$ is a new distinct stack symbol. Note that all these new transition push and pop the \emph{same} stack symbol, meaning that a new pair ``$\posp{}$'' can be started in any state and then closed in any state (not just the same state) as long as intermediate transitions leave the stack unchanged, i.e., only introduce zero or more well-balanced pairs.

\subsubsection{Formalization}

Given a language definition $D$, with terminals $\T$, non-terminals $\NT$, labels $\Labels$, and starting non-terminal $S$. We use $\labelnt(l)$ and $\labelregex(l)$ as the non-terminal and regex, respectively, of the production labelled $l$. We define the set $T : \NT \times \Labels^{*} \times \NT$ of transitions between productions inductively through:

\begin{itemize}
\item $(N, \epsilon, N) \in T$ for all $N \in \NT$.
\item $(N, l, N') \in T$ for all $N' \in \NT \cap L(\labelregex(l))$ where $\labelnt(l) = N$.
\item $(N_1, w_1 \cdot w_2, N_3) \in T$ for all $(N_1, w_1, N_2), (N_2, w_2, N_3) \in T$.
\end{itemize}

\noindent $T$ is finite if $D$ has no cycles $N =>^{+} N$. Such a cycle is easy to detect (depth-first search), and means that any tree that contains a production from the non-terminal $N$ only ever produces (infinitely) ambiguous words, i.e., all words for such a tree are unresolvably ambiguous.

We define the translation from production to DFA through a function $\dfa$ from a label to a DFA with the alphabet $\T \cup \Labels$:

\begin{description}
\item[$\dfa(l)$:] Compute the regular language $L' := L(\labelregex(l)) \setminus \NT$, then construct a DFA for this language.
\end{description}

\noindent We now define the VPDA $A_{\posp{}}$. Given $\dfa(l) = (Q_l, \T \cup \NT, \delta_l, s_l, F_l)$ for all $l \in Labels$, $A_{\posp{}} = (Q, \T', \Gamma, \delta, s, \{f\})$ where:

\begin{itemize}
\item $s$ and $f$ are two new distinct states.
\item $Q = \{s, f\} \cup \bigsqcup_l Q_l$.
\item $\T' = \T \cup \{'\pospl', '\pospr'\}$.
\item $\Gamma = Q \times Q \times \Labels^{*}$.
\item $\delta$ is defined by the following equations (unspecified cases produce $\emptyset$):
  $$
  \begin{array}{l}
    \delta(p, a, \lambda) = \{(\delta_l(p, a), \lambda)\} \qquad \text{ where } p \in Q_l \text{ and } a \in \T \\

    \delta(s, \text{``}\pospl\text{''}, \lambda) =
    \{ (s_{l}, (s, f, w)) \mid
    (S, w, N) \in T, N = \labelnt(l) \} \\

    \delta(p, \text{``}\pospl\text{''}, \lambda) = \\
    \qquad\{ (s_{l_3}, (p, q, w)) \mid
    p \in Q_{l_1}, \delta_{l_1}(p, N) = q, (N, w, N') \in T, N' = \labelnt(l_3) \} \\

    \delta(p, \text{``}\pospr\text{''}, (s, f, w)) =
    \{ (f, \lambda) \mid
    (S, w, N) \in T, N = \labelnt(l), p \in F_l \} \\

    \delta(q', \text{``}\pospr\text{''}, (p, q, w)) = \\
    \qquad \{ (q, \lambda) \mid
    q' \in F_{l_1}, \labelnt(l_1) = N', (N, w, N') \in T, p \in Q_{l_2}, \delta_{l_2}(p, N) = q \} \\

  \end{array}
  $$
\end{itemize}

\noindent We also construct a modified VPDA $A'_{\posp{}} = (Q, \T', \Gamma \cup \{\gamma\}, \delta', s, \{f\})$ where:

\begin{itemize}
\item $\gamma$ is a new distinct stack symbol.
\item $\delta'(p, a, g) = \delta(p, a, g) \cup \delta''(p, a, g)$, where $\delta''$ is given by:
  $$
  \begin{array}{r@{\;=\;}l@{}}
    \delta''(p, \text{``}\pospl\text{''}, \lambda)
    & \{ (p, \gamma) \} \\

    \delta''(p, \text{``}\pospr\text{''}, \gamma)
    & \{ (p, \lambda) \} \\
  \end{array}
  $$
\end{itemize}

\noindent To find two distinct successful runs, one in each automaton, that recognize the same word we construct the product automaton $A_{\posp{}} \times A'_{\posp{}} = (Q \times Q, \T', \Gamma \times (\Gamma \cup \{\gamma\}), \delta_{\times}, (s, s), \{(f, f)\})$, where $\delta_{\times}$ is described in \cite{alurVisiblyPushdownLanguages2004} (using the partitions $\T_c = \{ \pospl \}$, $\T_i = \T$, and $\T_r = \{\pospr\}$). If the product automaton has a successful run that passes through at least one configuration $((p, p'), (g, g')\cdot w)$ such that $p \neq p' \lor g \neq g'$ (distinct states, or distinct stack symbols, respectively), then that run corresponds to two distinct successful runs in $A_{\posp{}}$ and $A'_{\posp{}}$. We check for the existence of such a run by trimming the product automaton (as described in \cite{caralpTrimmingVisiblyPushdown2015}) and looking for a transition that pushes $(g, g')$ where $g \neq g'$, or transitions to $(q, q')$ where $q \neq q'$.

Performing this procedure on a language definition $D$ without non-grouping parentheses and marks gives a sound and complete decision procedure for determining if $D$ is resolvably ambiguous. If we instead take a language definition $D$ with marks but no non-grouping parentheses, then remove the marks and perform the procedure, then the absence of two such runs implies that $D$ is resolvably ambiguous, while their presence only implies the possibility of an unresolvable ambiguity; the marks might remove it.

\begin{lemma}
  There is a bijection between successful runs in $A_{\posp{}}$ and trees $t \in L(T_D)$.
\end{lemma}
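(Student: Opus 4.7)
The plan is to construct explicit maps in both directions and verify they are mutual inverses, leveraging the observation that the modification $L(\labelregex(l)) \setminus \NT$ assigns to each production one of two \emph{modes} in which it may appear in a run: either as the active DFA being traversed (the outer pair of brackets around a complete production) or as a chain element embedded inside the $w$ component of some pushed stack symbol. Correspondingly, I classify each node of a tree $t \in L(T_D)$ as a \emph{chain node} if its children form a single non-terminal subtree (i.e., the child sequence lies in $\NT$) and a \emph{non-chain node} otherwise. Because this classification is determined by the children rather than the label, the same label may appear in either role at different positions of the same tree, and the canonical encoding only allocates a bracket pair to non-chain nodes.

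The first key step is to show that every $N$-subtree admits a unique decomposition $(w, l, t_1, \ldots, t_k)$, obtained by descending from the root through chain nodes as long as possible: $w = l_1 \cdots l_m \in \Labels^{*}$ is the sequence of chain labels encountered, $l$ is the label of the first non-chain node reached (with $\labelnt(l) = N'$), and $t_1, \ldots, t_k$ are the non-terminal subtrees among the non-chain node's children. Unfolding the inductive definition of $T$ yields $(N, w, N') \in T$, and the intermediate non-terminals along the chain are uniquely recovered from successive $\labelnt$ values together with the final $N'$, even when a chain label's right-hand side matches several distinct single non-terminals.

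Using this decomposition, the forward map (tree to run) is defined by induction on tree height. For an $N$-subtree with decomposition $(w, l, t_1, \ldots, t_k)$ invoked from a parent non-terminal transition $p \xrightarrow{N} q$, one pushes $(p, q, w)$, moves to $s_l$, traverses $\dfa(l)$ along the unique accepting path that reads the non-chain node's terminal children in order and takes its non-terminal transitions in order (recursively producing a sub-run for each $t_i$), reaches a final state in $F_l$, and pops $(p, q, w)$ back to $q$. The outermost call uses the initial push to $s_{l_{\mathrm{root}}}$ with stack symbol $(s, f, w_{\mathrm{root}})$ and the corresponding final pop to $f$. Each required transition is present in $\delta$ by the constructions of $T$ and $\dfa$, together with the fact that the non-chain node's non-terminal child sequence is a word in $L(\labelregex(l)) \setminus \NT$.

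The backward map (run to tree) inverts this: every successful VPDA run has a canonical push/pop matching, and each matched pair together with the intervening terminal reads and nested sub-runs uniquely determines a subtree consisting of $m$ chain nodes (obtained from the $w$ stored in the popped stack symbol, expanded into an explicit chain via the uniqueness argument above) followed by an active non-chain node whose children are reconstructed from the DFA traversal and the nested sub-runs. That the reconstructed tree lies in $L(T_D)$ is a routine induction. Mutual inverseness then follows by structural induction on trees (equivalently, induction on the nesting depth of runs). I expect the main obstacle to be the bookkeeping around productions whose right-hand side matches both single non-terminals and longer words, since such a label can play different roles in the same tree; this is handled cleanly by basing the chain/non-chain classification on the children rather than the label itself.
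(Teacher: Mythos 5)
Your proposal is correct and follows exactly the route the paper sketches: defining $\treetorun$ and $\runtotree$ and showing they are mutual inverses, with your chain/non-chain node classification supplying the detail (the role of the $w$ component in the stack symbols and the $L(\labelregex(l)) \setminus \NT$ modification) that the paper's one-sentence proof leaves implicit. The elaboration is sound, including the observation that the intermediate non-terminals of a chain are recovered from the successive labels rather than from the (possibly ambiguous) right-hand-side regexes.
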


\begin{proof}[sketch]
  By defining two functions $\treetorun$ and $\runtotree$, the former from a tree to a run, the latter from run to tree, then showing these two functions to be inverses of each other.
\end{proof}

We denote the word recognized by a run $R$ by $\runtoword(R)$.

\begin{lemma}
  $\forall t \in L(T_D).\ \runtoword(\treetorun(t))$ is the canonical linear encoding of $\words(t)$.
\end{lemma}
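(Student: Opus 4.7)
The plan is to proceed by structural induction on $t$, paralleling the recursive definition of $\treetorun$ from the previous lemma. For the inductive step on a subtree whose root production is $l$ with children $t_1, \ldots, t_k$ interleaved among terminals along $\labelregex(l)$, the corresponding run consists of a push, a traversal of $\dfa(l)$ that interleaves terminal edges with the recursively constructed sub-runs for the $t_i$, and a matching pop. By the inductive hypothesis, each sub-run's emitted word is the canonical encoding of $\words(t_i)$; wrapping those sub-words with one pair $\posp{\cdots}$ together with the intervening terminals yields some linear encoding of $\words(t)$. This part is essentially bookkeeping on how $\parse$ relates the tree structure to the set of grouping-parenthesis placements.

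The main obstacle is showing that the encoding is \emph{canonical}. In this language class there are no $\reqp{}$ pairs, so canonicalization reduces to the single rewrite $\posp{\posp{\cdots}} \to \posp{\cdots}$; hence the obligation is to verify that no push in the run is ever immediately followed by another push without an intervening terminal (or a complete balanced pair that contains one), and symmetrically that no pop is immediately preceded by a matching pop in such a way. Two features of the construction combine to guarantee this. First, $\dfa(l)$ recognises $L(\labelregex(l)) \setminus \NT$, so the DFA refuses to accept a right-hand side consisting of a single non-terminal transition; this forces the body of every $\posp{\cdots}$ emitted between push and pop to contain either at least one terminal or two or more constituents. Second, a chain of single-non-terminal productions is absorbed into the stack-symbol component $w \in \Labels^{*}$ via the relation $T$, so that an entire maximal passthrough chain in $t$ contributes exactly one $\posp{}$ level in the run rather than one per node.

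The inductive case then splits on whether the root production is part of a maximal passthrough chain. In the non-passthrough case the two features above preclude $\posp{\posp{\cdots}}$ at the boundary between the new pair and its first child. In the passthrough case $\treetorun$ must be defined to suppress intermediate pairs and instead record the label in the stack symbol, mirroring how the closure rules for $T$ build up $w$; the inductive hypothesis then carries canonicity across the chain. Combining canonicity with the encoding claim, and invoking uniqueness of the canonical linear encoding from Section~\ref{sec:linear-encoding}, yields the lemma.
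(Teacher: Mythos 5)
The paper states this lemma without proof, so there is no written argument to compare against; your induction is sound and tracks exactly the two mechanisms the paper's informal discussion introduces for this purpose --- the exclusion of single-non-terminal bodies via $L(\labelregex(l)) \setminus \NT$ and the absorption of passthrough chains into the stack symbol via the relation $T$ --- which are precisely what make the emitted word a fixpoint of the $\posp{\posp{\cdots}} \to \posp{\cdots}$ rewrite in this parenthesis- and mark-free class. The only glossed step is the base claim that wrapping each non-passthrough node in exactly one $\posp{}$ pair yields \emph{a} linear encoding of $\words(t)$ at all, which rests on $G'_D$ permitting grouping parentheses exactly around complete productions; that is worth one explicit line but is consistent with the paper's construction.
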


\begin{lemma}
  For all $t \in L(T_D)$ and every linear encoding $c$ such that $\words(t) \subseteq L(c)$, $c \in L(A'_{\posp{}})$.
\end{lemma}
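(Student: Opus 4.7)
The plan is to show that $c$ is obtained from the canonical linear encoding $c_t$ of $\words(t)$ by inserting some additional optional-pair brackets, and then to lift the accepting run of $A_{\posp{}}$ on $c_t$ (which exists by the previous lemma) to an accepting run of $A'_{\posp{}}$ on $c$ by using the new $\delta''$ transitions to consume the inserted brackets.

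First, I would argue that $c$ contains no required-pair brackets $\reqp{\ldots}$. Since $D$ has no non-grouping parentheses and no marks, the word $\yield(t)$ (i.e., $t$'s yield with no added parentheses) lies in $\words(t)$, and hence in $L(c)$. But any $\reqp{\ldots}$ in $c$ would force every element of $L(c)$ to contain at least one pair of grouping parentheses at that position, contradicting $\yield(t) \in L(c)$. Thus every bracket of $c$ is of the form $\posp{\ldots}$.

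Second, I would establish that every bracket of $c_t$ appears at the same span in $c$. Brackets of $c_t$ correspond bijectively to nodes of $t$ via the construction of $A_{\posp{}}$. For each node $v$ with span $[i,j]$ in $\yield(t)$, the word $w_v$ obtained from $\yield(t)$ by inserting one pair of parentheses across $[i,j]$ (and no others) belongs to $\words(t) \subseteq L(c)$. Since parentheses in any word of $L(c)$ must come from an instantiation of some bracket of $c$, and the parentheses of $w_v$ occupy precisely $[i,j]$, $c$ must contain an optional bracket whose span equals $[i,j]$. Hence $c$ consists of every bracket of $c_t$ together with zero or more additional well-nested optional brackets.

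Finally, I would construct an accepting run of $A'_{\posp{}}$ on $c$ by overlaying the accepting $A_{\posp{}}$-run on $c_t$ with applications of $\delta''$ for the extra brackets: upon reading an inserted $\pospl$ push $\gamma$ and stay in the current state, and upon reading its matching $\pospr$ pop $\gamma$ and stay. Because the inserted pair is well-nested among $c_t$'s brackets, the intermediate transitions (coming from $\delta$) push and pop balanced stack symbols, leaving $\gamma$ on top at the moment $\pospr$ is read, so $\delta''$ applies. The composed run is accepting in $A'_{\posp{}}$.

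The main obstacle is the second step: checking that each bracket of $c_t$ must appear verbatim in $c$, which rests on the observation that in this language class every node $v$ of $t$ admits a word in $\words(t)$ with a single pair of parentheses placed exactly at $v$'s span; the absence of marks and non-grouping parentheses is essential here, since with marks some node may be forced to carry extra parentheses and the argument that "one pair at span $[i,j]$" is a valid witness would break.
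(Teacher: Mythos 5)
The paper states this lemma with no proof at all, so there is no author argument to compare yours against; judged on its own, your proof is sound and is almost certainly the argument the authors intended. The three steps are the right ones: (1) $\yield(t)\in\words(t)\subseteq L(c)$ rules out any $\reqp{\ldots}$ in $c$; (2) for each bracket of the canonical encoding $c_t$ of $\words(t)$, the word of $\words(t)$ carrying exactly one parenthesis pair at that bracket's span forces $c$ to contain an optional bracket with the same span, so $c$ is $c_t$ with additional well-nested $\posp{\ldots}$ pairs inserted; (3) the accepting run of $A_{\posp{}}$ on $c_t$ lifts to $A'_{\posp{}}$ by consuming each extra pair with $\delta''$, the pop succeeding because the material strictly inside a matched extra pair is bracket-balanced and therefore leaves $\gamma$ on top of the stack when the closing $\pospr$ is read. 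Two small remarks. First, ``brackets of $c_t$ correspond bijectively to nodes of $t$'' is not literally true: chains of productions whose right-hand side matches a single non-terminal are collapsed into a single bracket, recorded in the $\Labels^{*}$ component of the stack symbol, so the correspondence is with such chains rather than with individual nodes. This does not damage your argument, which only needs that every bracket of $c_t$ has a span witnessed by a one-pair word of $\words(t)$. Second, your construction actually establishes something stronger and more useful than the stated membership claim: the run you build erases, under $\runprimetorun$, to the $A_{\posp{}}$-run of $t$ itself. That stronger form is what the final theorem of the section really needs---the bare statement $c\in L(A'_{\posp{}})$ is trivially true whenever $c$ is itself a canonical encoding, since $A'_{\posp{}}$ extends $A_{\posp{}}$---so it is worth recording explicitly.
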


\begin{lemma}
  We can define a function $\runprimetorun$ from runs in $A'_{\posp{}}$ to runs in $A_{\posp{}}$ such that $\forall R.\ L(\runtoword(\runprimetorun(R))) \subseteq L(\runtoword(R))$.
\end{lemma}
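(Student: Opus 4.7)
The plan is to define $\runprimetorun(R)$ as the sequence obtained from $R$ by deleting every transition arising from $\delta''$, i.e., every transition that pushes or pops the fresh stack symbol $\gamma$. The argument then breaks into three parts: (i) verifying that $\runprimetorun(R)$ is a well-formed successful run of $A_{\posp{}}$, (ii) relating the two recognized words, and (iii) deducing the claimed language inclusion.

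For (i), I would use that every $\delta''$-transition leaves the state untouched and that $\gamma$ is mentioned by no rule in $\delta$. Since the call/return pairing in a VPDA is determined by the bracket structure of the input, every $\gamma$ that $R$ pushes must be popped by another $\delta''$-transition. Erasing all $\delta''$-steps therefore deletes a collection of well-matched $\posp{}$-pairs from the input word, so the remaining $\delta$-pops still encounter, as the top of the stack, exactly the $\Gamma$-symbol that their matching $\delta$-pushes placed there. Hence the residual sequence is a valid successful run in $A_{\posp{}}$. For (ii), each matched pair of $\delta''$-transitions contributes exactly one well-nested $\posp{}$-pair to the input word, so $\runtoword(R)$ is obtained from $\runtoword(\runprimetorun(R))$ by inserting finitely many well-nested $\posp{}$-pairs.

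For (iii), we use that $L$ is well-defined on arbitrary linear encodings, canonical or not, with $\posp{}$ denoting ``zero or more concrete parenthesis pairs''. Hence inserting an extra $\posp{}$ into any encoding $c$ yields an encoding $c'$ with $L(c) \subseteq L(c')$: every word in $L(c)$ is witnessed in $L(c')$ by choosing zero pairs for the new slot. Iterating this over the pairs inserted by the $\delta''$-transitions gives $L(\runtoword(\runprimetorun(R))) \subseteq L(\runtoword(R))$. The main obstacle will be (i): one must confirm that stripping the $\gamma$-transitions genuinely preserves the pairing of the remaining pushes and pops. A structural induction on the well-nested bracket structure of $\runtoword(R)$, exploiting the VPDA discipline given by the call/return partition of $\T'$, discharges this point cleanly; once it is in place, steps (ii) and (iii) are essentially bookkeeping.
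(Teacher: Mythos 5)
Your proposal is correct and takes exactly the same route as the paper, whose entire proof sketch reads ``By removing every transition that pushes or pops $\gamma$''; your elaboration of why the residual sequence remains a valid run (the $\delta''$-loops fix the state, $\gamma$ is foreign to $\delta$, and the deleted pairs are well-nested) and why deleting $\posp{}$ pairs can only shrink the denoted language simply fills in the details the paper leaves implicit.
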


\begin{proof}[sketch]
  By removing every transition that pushes or pops $\gamma$.
\end{proof}

\begin{theorem}
  The existence of two distinct runs $R$ and $R'$ such that $R$ is a successful run in $A_{\posp{}}$ and $R'$ in $A'_{\posp{}}$, and $\runtoword(R) = \runtoword(R')$, implies the existence of two distinct trees such that the word language of one is entirely contained in the other.
\end{theorem}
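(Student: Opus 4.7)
The plan is to unpack the two runs via $\runtotree$ to obtain two trees, apply the containment lemma for $\runprimetorun$ to deduce word-language subset, and then argue distinctness of the trees by a short case analysis on whether $R'$ uses any transitions involving the special stack symbol $\gamma$. The heavy lifting is done by the four preceding lemmas: the bijection between successful $A_{\posp{}}$-runs and trees, the characterization of $\runtoword \circ \treetorun$ as the canonical encoding of $\words$, and the subset property of $\runprimetorun$.

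Concretely, I would set $t_1 := \runtotree(R)$, $R'' := \runprimetorun(R')$, and $t_2 := \runtotree(R'')$. For the containment, the lemma on $\runprimetorun$ gives $L(\runtoword(R'')) \subseteq L(\runtoword(R'))$, where $L$ reads a linear encoding as a word language. Since $\runtoword(R) = \runtoword(R')$ by hypothesis, and $\runtoword(R)$ and $\runtoword(R'')$ are the canonical linear encodings of $\words(t_1)$ and $\words(t_2)$ respectively, this rewrites to $\words(t_2) \subseteq \words(t_1)$.

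For $t_1 \neq t_2$, I split on whether $R'$ contains any transition pushing or popping $\gamma$. If not, then every transition of $R'$ already lies in $\delta$, so $R'' = R'$ as a sequence of transitions; since $R \neq R'$ by hypothesis, the bijection between successful $A_{\posp{}}$-runs and trees yields $t_1 \neq t_2$. Otherwise, $\runprimetorun$ strips away the $\gamma$-transitions together with the parenthesis terminals they emit, so $\runtoword(R'')$ is strictly shorter than $\runtoword(R') = \runtoword(R)$; because both words are canonical linear encodings of their respective word languages, uniqueness of the canonical encoding forces $\words(t_1) \neq \words(t_2)$, and in particular $t_1 \neq t_2$.

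The main subtlety is that $R$ and $R'$ live in different automata, so care is needed to make the ``no $\gamma$-transition'' case legitimate: it works precisely because $\delta \subseteq \delta'$ by construction, so a $\gamma$-free run of $A'_{\posp{}}$ is literally a run of $A_{\posp{}}$ and can be compared with $R$ directly. The containment step and the ``strictly shorter'' step are then essentially book-keeping given the previously stated lemmas.
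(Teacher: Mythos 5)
Your proof is correct and follows essentially the same route as the paper's: both hinge on applying $\runprimetorun$ to $R'$, invoking the containment lemma together with the hypothesis $\runtoword(R) = \runtoword(R')$, and exploiting the dichotomy that $\runprimetorun$ either leaves its argument unchanged or changes the recognized word. The paper packages that dichotomy as a short contradiction establishing $\runprimetorun(R') \neq R$ before appealing to the run--tree bijection, whereas you unfold it into an explicit case split on $\gamma$-transitions (with a minor detour through canonicity of encodings in the second case); the substance is the same.
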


\begin{proof}[sketch]
  Assume $\runprimetorun(R') = R$. $\runprimetorun$ either leaves its argument unchanged, or changes the recognized word. If $R'$ and $R$ recognize the same word, and $\runprimetorun(R') = R$, then $R' = R$, but that is a contradiction.

  Since $\runprimetorun(R') \neq R$ we have two distinct trees $t_1 = \runtotree(\runprimetorun(R'))$ and $t_2 = \runtotree(R)$. Since $L(\runtoword(\runprimetorun(R'))) \subseteq L(\runtoword(R'))$ and $L(\runtoword(R')) = L(\runtoword(R))$ we have that $L(t_1) \subseteq L(t_2)$.
\end{proof}

We thus have a way to detect the presence or absence of a pair of trees $t_1, t_2 \in L(T_D)$ such that $\words(t_1) \subseteq \words(t_2)$, which completely determines the resolvability of $D$ when $D$ has no non-grouping parentheses or marks.

\section{Case Studies} \label{sec:evaluation}

We have implemented a tool and a small DSL for syntax definition
(Section~\ref{sec:evaluation-syncon}) that generates a language
definition in the style of
Section~\ref{sec:parse-time-disambiguation}, which the tool then
uses to construct a parser and do dynamic resolvability analysis.
In this section, we showcase two possible use-cases of dynamic
resolvability analysis using this tool: composing separately
defined DSLs (Section~\ref{sec:evaluation-orc}), and finding
ambiguities in grammars through testing
(Section~\ref{sec:evaluation-ocaml}).

\subsection{A DSL for Syntax Definition}
\label{sec:evaluation-syncon}

We write our syntax definitions in a DSL building on
\emph{syncons}, introduced by Palmkvist and
Broman~\cite{palmkvistCreatingDomainSpecificLanguages2019},
wherein each production is defined separately from each other (one
\syncon{syncon} each).
The tool is implemented in Haskell and uses a custom implementation of
the Earley parsing algorithm \cite{earleyEfficientContextfreeParsing1970}
and a different version of the dynamic analysis that is faster in the
common case, but not guaranteed to terminate on an unresolvable
ambiguity. We bound the execution time with a timeout, and report
``unresolvable'' if this timeout is triggered.

The main difference between the DSL and the formalism in
Section~\ref{sec:parse-time-disambiguation} is that marks are
introduced as separate \syncon{forbid} declarations instead of
being inlined in productions. This allows supporting
convenience constructs, e.g., specifying precedence between
previously defined operators instead of manually inserting marks,
but is also important for composability as ambiguities can be
statically resolved without changing the original definitions.
Other convenience constructs include prefix, infix, and
postfix operators with a given associativity.
The running example used in
Section~\ref{sec:parse-time-disambiguation}
(Fig.~\ref{fig:running-example-definition} on
page~\pageref{fig:running-example-definition}) can be defined as
follows:

\begin{tabular}{ll}
\small
\begin{lstlisting}[language=syncon,boxpos=t]
type Exp
grouping "(" Exp ")"
precedence {
  mul; // higher in list means
  add; // higher precedence
}
\end{lstlisting}
&
\small
\begin{lstlisting}[language=syncon,boxpos=t]
token Integer = "[0-9]+"
syncon literal: Exp = n:Integer
syncon list: Exp =
  "[" (head:Exp (";" tail:Exp)*)? "]"
infix add: Exp = "+"
infix mul: Exp = "*"
\end{lstlisting}
\end{tabular}\smallskip

\noindent Here, we define a syntax type \syncon{Exp} for
expressions, and declare that parentheses can be used to group
expressions. We define a lexical token for integers, and a
\syncon{syncon} for integer literals using this token. A list is
defined as a bracketed sequence of zero or more expressions separated by
semi-colons. Finally, addition and
multiplication are defined as infix syncons with the expected
precedence rules. Note that we could also have replaced
the precedence list with explicit \syncon{forbid} declarations,
``\syncon{forbid mul.left = add}'' and ``\syncon{forbid mul.right = add}''
(cf. the mark $\{a\}$ in the production of $m$ in
Fig.~\ref{fig:running-example-definition}).



\subsection{Composing Language Definitions} \label{sec:evaluation-orc}

In this section, we consider the use case of defining a language
and composing it with another previously defined language, showing
how we can deal with the resulting ambiguities that arise from the
composition.
The language being defined is a subset of
Orc~\cite{kitchinOrc2009}, a functional programming language which
includes a number of special-purpose combinators for coordinating
concurrent workflows. On their own, these combinators act as a DSL
for concurrency.

In Orc, every expression can ``publish'' zero or more values. Orc
defines four combinators for orchestrating these published values:
the parallel (\ocaml{|}), sequential (\ocaml{>}$x$\ocaml{>}),
pruning (\ocaml{<}$x$\ocaml{<}), and ``otherwise'' (\ocaml{;})
combinators.
The expression $e_1$ \ocaml{|} $e_2$ runs $e_1$ and
$e_2$ in parallel, publishing any value published by either of
them.
The expression $e_1$ \ocaml{>}$x$\ocaml{>} $e_2$ executes
$[x\mapsto v]e_2$ for \emph{each} value $v$ published by $e_1$,
building a concurrent pipeline.
The expression $e_1$ \ocaml{<}$x$\ocaml{<} $e_2$ executes
$[x\mapsto v]e_1$ for \emph{first} value $v$ published by $e_2$
(discarding any remaining values published by $e_2$).
Finally, $e_1$\ocaml{;}$e_2$ runs $e_2$ only if running $e_1$ does
not publish any values.

The syncon definition of the Orc combinators is very simple (the
precedence rules follow the original
definition~\cite{kitchinOrc2009}):

\begin{tabular}{ll}
\small
\begin{lstlisting}[language=syncon,boxpos=t]
type Exp
grouping "(" Exp ")"
precedence {
  seq; par; prune; otherwise;
}
\end{lstlisting}
&
\small
\begin{lstlisting}[language=syncon,boxpos=t]
token Ident = "[[:lower:]][[:word:]]*"
infix par:Exp = "|"
infix seq:Exp = ">" x:Ident ">"
infix prune:Exp = "<" x:Ident "<"
infix otherwise:Exp = ";"
\end{lstlisting}
\end{tabular}\smallskip

\noindent
Na\"{i}vely composing this definition with another, separately
defined language is likely to introduce ambiguities. For example,
if that language also contains infix operators, the precedence
between these and the Orc combinators will be undefined. With
support for resolvable ambiguity, however, we can allow this
ambiguity and let programmers use parentheses for disambiguation.
After composing the above definition with a simple language
supporting addition, variables, and function calls (full
definition omitted for brevity), parsing the expression
``\ocaml{1 + 2 >x> f(x)}''
results in the following error message from our tool:

{\small
\begin{lstlisting}[language={[objective]caml}]
Ambiguity error with 2 alternatives:
  ( 1 + 2 ) > x > f ( x )
  1 + ( 2 > x > f ( x ) )
\end{lstlisting}
}

\noindent
Because there is no precedence specified between \ocaml{+} and
\ocaml{>x>}, disambiguation is required to specify the order of
operations.
In this case, it is likely that the preferred semantics are to have
all operators in the base language bind tighter than the Orc
combinators, and these precedence rules can be added after the
composition without changing the original definitions. Importantly
though, we are not \emph{required} to resolve this ambiguity at
time of composition.

Because of how we defined the syntax of the combinators, and since
the tool is currently whitespace insensitive, the
multi-character combinators (sequencing and pruning) are parsed as
three separate lexical tokens (this is visible in how whitespace
is inserted in the error message above). This means we can also
run into \emph{unresolvable} ambiguities, for example if our base
language includes comparison of numbers with \ocaml{<} and
\ocaml{>}. With such a base language (assuming no associativity
for \ocaml{>}), parsing the expression
``\ocaml{42 >x> f(x)}''
will result in the following error message:

{\small
\begin{lstlisting}[language={[objective]caml}]
Unresolvable ambiguity error with 2 alternatives.
\end{lstlisting}

\begin{minipage}{.3\textwidth}
\begin{lstlisting}[language={[objective]caml}]
Resolvable alternatives:
  ( 42 > x ) > f ( x )
  42 > ( x > f ( x ) )
\end{lstlisting}
\end{minipage}
\hfill
\begin{minipage}{.5\textwidth}
\begin{lstlisting}[language={[objective]caml}]
Unresolvable alternatives:
  seq
   - int        gt.orc:1:1-3
   - call       gt.orc:1:8-12
\end{lstlisting}
\end{minipage}
}

\noindent
By adding parentheses, a programmer can disambiguate the
expression as two comparisons. However, there is no way for a
programmer to specify that what they want is a sequential
combinator with left and right children being an integer and a
function call, respectively. In this case we have at least three
choices to make as language designers:

\begin{itemize}
\item We could decide to forbid the case where two comparisons are
  right next to each other, e.g., \syncon{forbid gt.left = gt} and
  \syncon{forbid gt.right = gt} (where \syncon{gt} is the syncon
  for the greater-than operator).
\item We could change the definition of the parallel combinators
  and define separate lexical tokens for the combinators, e.g,
  \syncon{token Seq = ">[[:lower:]]}\\\syncon{[[:word:]]*>"}. This
  would add just enough whitespace sensitivity to allow separating
  the different cases.
\item We could change the syntax of the combinators to avoid
  clashes.
\end{itemize}

\noindent
The first alternative is somewhat ad-hoc, but can be done after composition
and would allow us to reuse both language definitions without
modifications. In the latter two alternatives, we lose reuse of
the definitions of Orc combinators, but place no additional
restrictions on the base language. The preferred resolution
strategy is likely to differ between different compositions and
different languages.
For example, another unresolvable ambiguity is going to show up if
the base language uses semi-colons, e.g. for sequencing
expressions (this would clash with the ``otherwise'' combinator).
In this case, there is no reasonable way to disambiguate
``\ocaml{e1; e2}'' without changing the syntax of one of the
operations.

The main takeaway from this case study is that resolvable
ambiguity makes composition of languages less restrictive than if
all ambiguity is completely banned. The approach is strictly more
general since dynamic resolvability analysis allows deferring
disambiguation to the programmer, while removing ambiguities from
the resulting grammar is also possible.

\subsection{Finding Ambiguities with Dynamic Resolvability Analysis} \label{sec:evaluation-ocaml}

In this section, we investigate how to apply dynamic resolvability
analysis to find ambiguities in the specification of a
real-world language.
Using syncons, we have specified a substantial subset of OCaml's
syntax as described in chapters 7 and 8 (base language and
extensions) of the OCaml reference manual~\cite{leroyOCamlSystemRelease2018}.
The syncon definition is just under 700 lines long, and can
currently parse roughly 75\% (1012 out of 1334
files) of the \verb|.ml| files present in the OCaml compiler
itself (we discuss limitations of our tool in the end of this
section).

By using our implementation of the OCaml syntax to parse
real-world OCaml code, we can dynamically identify ambiguities in
the grammar presented in the reference manual, instead of silently
resolving these in the implementation of an unambiguous parser.
For example, an OCaml expression can be a function application
$\mathit{expr}~\{\mathit{argument}\}^{+}$ or a constructor
application $\mathit{constr}~\mathit{expr}$. Since an expression
can also be a constructor, however, the language definition allows
``\ocaml{Foo 42}'' to be parsed both as a function application and
as a constructor application (only the latter is well-typed, but
both are \emph{syntactically} well-formed).
Similarly, the expression ``\ocaml{Foo.f}'' can be parsed as a
field access targeting either a constructor or a module, since
both must start with capital letters (again, only the latter is
well-typed).
Another example, that was already mentioned in
Section~\ref{sec:parse-time-disambiguation}, is the fact that
semi-colons are used both for sequencing and for separating
elements in lists, arrays and records. Thus, ``\ocaml{[1; 2]}''
can be parsed both as a list of two elements and a singleton list
equivalent to ``\ocaml{[(1; 2)]}''.

As this case study shows, implementing a parser that reports
ambiguities lets us identify ambiguities in a grammar through
testing. Even though detecting ambiguities in a grammar is
undecidable in general, dynamic resolvability analysis over a
large corpus of code lets us find (and fix) many cases of
ambiguity in practice.
In this case study, out of the 700 lines of syncon code, $\sim$200
lines are additions to conform to how the canonical compiler
behaves on cases that are under-specified in the
manual, including the examples listed above.


\paragraph{Current Limitations of Syncons}
Other than syntactic OCaml constructs that are not yet specified,
there are two sources of failure in the parts of the OCaml
compiler that we cannot parse.
First, our parsing tool does not support specifying ``longest
match'' on a production, which is required to handle the pattern
matching constructs correctly. Some cases can be handled using
\syncon{forbid} declarations, but it does not get us all the way.
Second, our system uses a different definition of precedence than
the OCaml language. Our translation from precedence to implicit
\syncon{forbid} declarations is shallow (it only considers direct
children), while the OCaml has deep precedence. For example,
addition binds stronger than \ocaml{let} (which syntactically
functions as a prefix operator in OCaml), and thus
``\ocaml{1 + let x = 1 in x + 2}'' should be parsed as
``\ocaml{1 + (let x = 1 in (x + 2))}''. Our tool, however, reports
the expression as ambiguous, additionally suggesting the
alternative interpretation
``\ocaml{(1 + (let x}\\\ocaml{= 1 in x)) + 2}''.
This interpretation is indeed valid if we only look at the direct
children of each operator, since ``\ocaml{(_ + _) + 2}'',
``\ocaml{1 + let _ in _}'', and ``\ocaml{let x = 1 in x}'' are all
individually correct with regards to precedence.



\section{Related Work}\label{sec:related-work}

Our related work falls in three categories: syntax definition formalisms (including subclasses of CFGs), language frameworks, and other approaches to ambiguity.

Afroozeh et al.~\cite{afroozehSafeSpecificationOperator2013}'s operator ambiguity removal patterns bear a striking resemblance to the marks presented in this paper. However, in special-casing (what in this paper would be) marks on left and right-recursions in productions they correctly cover the edge case involving deep precedence discussed in Section~\ref{sec:evaluation-ocaml}. This approach thus suggests an interesting direction for future work: extend the algorithms presented in this paper to cover it.

Danielsson and Norell~\cite{danielssonParsingMixfixOperators2011} give a method for specifying grammars for expressions containing mixfix operators. They allow non-transitive, non-total precedence, and ``feel that it is overly restrictive to require the grammar to be unambiguous.'' Similar to our approach, they do not reject ambiguous grammars, only ambiguous parses. They also introduce a concept of \emph{precedence correct} expressions; expressions where direct children must have higher precedence than parents. This is more restrictive than our approach, e.g., in a language where \verb|'+'| and \verb|'*'| have no defined relative precedence they reject \verb|'1 + 2 * 3'| as syntactically invalid, while we parse it as an ambiguous expression.

Parsing expression grammars \cite{fordParsingExpressionGrammars2004} sidestep the issue of ambiguity by not introducing it at all. However, this also loses the potential gains of leaving certain ambiguities. Additionally, since the ordering of productions matter, composition of languages must be ordered, and the interactions between composed languages becomes non-obvious, e.g., merely adding productions may remove previously recognized words from the language, depending on where they are added.

Most commonly used parser generators are based on unambiguous CFG subclasses, e.g., LL(k), LR(k), or LR(*). Others do not fit neatly in the Chomsky hierarchy, but still produce a single parse tree per parse, e.g., LL(*) \cite{parrLLFoundationANTLR2011} and ALL(*) \cite{parrAdaptiveLLParsing2014}. Yet others produce multiple parse trees or other forms of parse forests, e.g., GLR \cite{langDeterministicTechniquesEfficient1974}, GLL \cite{scottGLLParsing2010}, and Earley \cite{earleyEfficientContextfreeParsing1970}.

Silver \cite{vanwykSilverExtensibleAttribute2010}, a system for defining extensible languages using attribute grammars, and its associated parser Copper \cite{vanwykContextawareScanningParsing2007} have a ``Modular Well-Definedness Analysis'' \cite{kaminskiModularWellDefinednessAnalysis2013}, the syntactic component of which can be found in \cite{schwerdfegerVerifiableCompositionDeterministic2009}. This analysis guarantees that the composition of a base language and any number of extensions that have passed the analysis will compose to a grammar in LALR(1). This language class is more restrictive than both unambiguous and resolvably ambiguous languages, though somewhat comparable to the subclass our static analysis supports.

The detection of classical ambiguity in context-free grammars is undecidable in general \cite{cantorAmbiguityProblemBackus1962}, yet numerous heuristic approaches exist. Examples include linguistic characterizations and regular language approximations \cite{brabrandAnalyzingAmbiguityContextFree2007}, using SAT-solvers \cite{axelssonAnalyzingContextFreeGrammars2008}, and other conservative approaches \cite{schmitzConservativeAmbiguityDetection2007}, For an overview, and additional approaches, see the PhD thesis of Basten~\cite{bastenAmbiguityDetectionProgramming2011}.

Numerous language development frameworks and libraries support syntactic language composition \emph{without} any guarantees on the resulting language (e.g., \cite{heeringSyntaxDefinitionFormalism1989,lorenzenSoundTypedependentSyntactic2016,erdwegLayoutSensitiveGeneralizedParsing2013,katsSpoofaxLanguageWorkbench2010}). These systems tend to have some form of general parser, so that they can handle arbitrary context-free grammars, but mention no handling of ambiguities encountered by an end-user.


\section{Conclusion}\label{sec:conclusion}
In this paper, we introduce the concept of \emph{resolvable
  ambiguity}. A language grammar is resolvably ambiguous if all
ambiguities can be resolved by the end-user at parse time. This
approach departs from the common standpoint that grammars and
syntax definitions of languages must be unambiguous.
As part of the new concept, we formalize the fundamental
\emph{resolvable ambiguity problem}, divide it into static and
dynamic parts, and provide solutions for both variants for a
restricted class of languages. Through case studies, we show
practical applicability of the approach, both for building new
domain-specific languages and for reasoning about existing
general-purpose languages.

In future work, we will investigate weakening the restrictions
currently needed for the static and dynamic resolvability
algorithms to work. We also intend to develop our theory and tools
to support handling deep precedence ambiguities and ambiguities
based on ``longest match''. A long term goal is to be able to
suggest changes to the grammar of a language, based on ambiguities
found through dynamic analysis.

\bibliographystyle{splncs04}

\bibliography{All}

\begin{thebibliography}{10}
\providecommand{\url}[1]{\texttt{#1}}
\providecommand{\urlprefix}{URL }
\providecommand{\doi}[1]{https://doi.org/#1}

\bibitem{afroozehSafeSpecificationOperator2013}
Afroozeh, A., {van den Brand}, M., Johnstone, A., Scott, E., Vinju, J.: Safe
  {{Specification}} of {{Operator Precedence Rules}}. In: Erwig, M., Paige,
  R.F., Van~Wyk, E. (eds.) Software {{Language Engineering}}. pp. 137--156.
  Lecture {{Notes}} in {{Computer Science}}, {Springer International
  Publishing} (2013)

\bibitem{ahoCompilersPrinciplesTechniques2006}
Aho, A.V., Lam, M.S., Sethi, R., Ullman, J.D.: Compilers: {{Principles}},
  {{Techniques}}, and {{Tools}}. {Addison Wesley}, {Boston}, 2nd edn. (Sep
  2006)

\bibitem{alurVisiblyPushdownLanguages2004}
Alur, R., Madhusudan, P.: Visibly {{Pushdown Languages}}. In: Proceedings of
  the {{Thirty}}-Sixth {{Annual ACM Symposium}} on {{Theory}} of {{Computing}}.
  pp. 202--211. {{STOC}} '04, {ACM}, {New York, NY, USA} (2004).
  \doi{10.1145/1007352.1007390}

\bibitem{axelssonAnalyzingContextFreeGrammars2008}
Axelsson, R., Heljanko, K., Lange, M.: Analyzing {{Context}}-{{Free Grammars
  Using}} an {{Incremental SAT Solver}}. In: Aceto, L., Damg{\aa}rd, I.,
  Goldberg, L.A., Halld{\'o}rsson, M.M., Ing{\'o}lfsd{\'o}ttir, A.,
  Walukiewicz, I. (eds.) Automata, {{Languages}} and {{Programming}}. pp.
  410--422. Lecture {{Notes}} in {{Computer Science}}, {Springer Berlin
  Heidelberg} (2008)

\bibitem{bastenAmbiguityDetectionProgramming2011}
Basten, B.: Ambiguity {{Detection}} for {{Programming Language Grammars}}.
  Ph.D. thesis, Universiteit van Amsterdam (Dec 2011)

\bibitem{brabrandAnalyzingAmbiguityContextFree2007}
Brabrand, C., Giegerich, R., M{\o}ller, A.: Analyzing {{Ambiguity}} of
  {{Context}}-{{Free Grammars}}. In: Holub, J., {\v Z}{\v d}{\'a}rek, J. (eds.)
  Implementation and {{Application}} of {{Automata}}. pp. 214--225. Lecture
  {{Notes}} in {{Computer Science}}, {Springer Berlin Heidelberg} (2007)

\bibitem{cantorAmbiguityProblemBackus1962}
Cantor, D.G.: On {{The Ambiguity Problem}} of {{Backus Systems}}. Journal of
  the ACM  \textbf{9}(4),  477--479 (Oct 1962). \doi{10.1145/321138.321145}

\bibitem{caralpTrimmingVisiblyPushdown2015}
Caralp, M., Reynier, P.A., Talbot, J.M.: Trimming visibly pushdown automata.
  Theoretical Computer Science  \textbf{578},  13--29 (May 2015).
  \doi{10.1016/j.tcs.2015.01.018}

\bibitem{comonTreeAutomataTechniques2007}
Comon, H., Dauchet, M., Gilleron, R., L{\"o}ding, C., Jacquemard, F., Lugiez,
  D., Tison, S., Tommasi, M.: Tree {{Automata Techniques}} and {{Applications}}
  (2007), release October, 12th 2007

\bibitem{cooperEngineeringCompiler2011}
Cooper, K., Torczon, L.: Engineering a {{Compiler}}. {Elsevier}, 2nd edn. (Jan
  2011)

\bibitem{danielssonParsingMixfixOperators2011}
Danielsson, N.A., Norell, U.: Parsing {{Mixfix Operators}}. In: Scholz, S.B.,
  Chitil, O. (eds.) Implementation and {{Application}} of {{Functional
  Languages}}. pp. 80--99. Lecture {{Notes}} in {{Computer Science}}, {Springer
  Berlin Heidelberg} (2011)

\bibitem{earleyEfficientContextfreeParsing1970}
Earley, J.: An {{Efficient Context}}-free {{Parsing Algorithm}}. Communications
  of the ACM  \textbf{13}(2),  94--102 (Feb 1970). \doi{10.1145/362007.362035}

\bibitem{erdwegLayoutSensitiveGeneralizedParsing2013}
Erdweg, S., Rendel, T., K{\"a}stner, C., Ostermann, K.: Layout-{{Sensitive
  Generalized Parsing}}. In: Czarnecki, K., Hedin, G. (eds.) Software
  {{Language Engineering}}. pp. 244--263. Lecture {{Notes}} in {{Computer
  Science}}, {Springer Berlin Heidelberg} (2013)

\bibitem{fordParsingExpressionGrammars2004}
Ford, B.: Parsing {{Expression Grammars}}: {{A Recognition}}-based {{Syntactic
  Foundation}}. In: Proceedings of the 31st {{ACM SIGPLAN}}-{{SIGACT
  Symposium}} on {{Principles}} of {{Programming Languages}}. pp. 111--122.
  {{POPL}} '04, {ACM}, {New York, NY, USA} (2004). \doi{10.1145/964001.964011}

\bibitem{ginsburgAmbiguityContextFree1966}
Ginsburg, S., Ullian, J.: Ambiguity in {{Context Free Languages}}. Journal of
  the ACM  \textbf{13}(1),  62--89 (Jan 1966). \doi{10.1145/321312.321318}

\bibitem{heeringSyntaxDefinitionFormalism1989}
Heering, J., Hendriks, P.R.H., Klint, P., Rekers, J.: The {{Syntax Definition
  Formalism SDF}}\textemdash{{Reference Manual}}\textemdash{}. SIGPLAN Not.
  \textbf{24}(11),  43--75 (Nov 1989). \doi{10.1145/71605.71607}

\bibitem{kaminskiModularWellDefinednessAnalysis2013}
Kaminski, T., Van~Wyk, E.: Modular {{Well}}-{{Definedness Analysis}} for
  {{Attribute Grammars}}. In: Czarnecki, K., Hedin, G. (eds.) Software
  {{Language Engineering}}. pp. 352--371. Lecture {{Notes}} in {{Computer
  Science}}, {Springer Berlin Heidelberg} (2013)

\bibitem{katsSpoofaxLanguageWorkbench2010}
Kats, L.C., Visser, E.: The {{Spoofax Language Workbench}}: {{Rules}} for
  {{Declarative Specification}} of {{Languages}} and {{IDEs}}. In: Proceedings
  of the {{ACM International Conference}} on {{Object Oriented Programming
  Systems Languages}} and {{Applications}}. pp. 444--463. {{OOPSLA}} '10,
  {ACM}, {New York, NY, USA} (2010). \doi{10.1145/1869459.1869497}

\bibitem{kitchinOrc2009}
Kitchin, D., Quark, A., Cook, W., Misra, J.: The {{Orc}} programming language.
  In: Formal Techniques for {{Distributed Systems}}, pp. 1--25. {Springer}
  (2009)

\bibitem{langDeterministicTechniquesEfficient1974}
Lang, B.: Deterministic {{Techniques}} for {{Efficient Non}}-{{Deterministic
  Parsers}}. In: Loeckx, J. (ed.) Automata, {{Languages}} and {{Programming}}.
  pp. 255--269. Lecture {{Notes}} in {{Computer Science}}, {Springer Berlin
  Heidelberg} (1974)

\bibitem{leroyOCamlSystemRelease2018}
Leroy, X., Doligez, D., Frisch, A., Garrigue, J., R{\'e}my, D., Vouillon, J.:
  The {{OCaml}} system release 4.07: {{Documentation}} and user's manual.
  Report (Jul 2018)

\bibitem{lorenzenSoundTypedependentSyntactic2016}
Lorenzen, F., Erdweg, S.: Sound {{Type}}-dependent {{Syntactic Language
  Extension}}. In: Proceedings of the 43rd {{Annual ACM SIGPLAN}}-{{SIGACT
  Symposium}} on {{Principles}} of {{Programming Languages}}. pp. 204--216.
  {{POPL}} '16, {ACM}, {New York, NY, USA} (2016).
  \doi{10.1145/2837614.2837644}

\bibitem{palmkvistCreatingDomainSpecificLanguages2019}
Palmkvist, V., Broman, D.: Creating {{Domain}}-{{Specific Languages}} by
  {{Composing Syntactical Constructs}}. In: Alferes, J.J., Johansson, M. (eds.)
  Practical {{Aspects}} of {{Declarative Languages}}. pp. 187--203. Lecture
  {{Notes}} in {{Computer Science}}, {Springer International Publishing} (2019)

\bibitem{parrLLFoundationANTLR2011}
Parr, T., Fisher, K.: {{LL}}(*): {{The Foundation}} of the {{ANTLR Parser
  Generator}}. In: Proceedings of the {{32Nd ACM SIGPLAN Conference}} on
  {{Programming Language Design}} and {{Implementation}}. pp. 425--436.
  {{PLDI}} '11, {ACM}, {New York, NY, USA} (2011).
  \doi{10.1145/1993498.1993548}

\bibitem{parrAdaptiveLLParsing2014}
Parr, T., Harwell, S., Fisher, K.: Adaptive {{LL}}(*) {{Parsing}}: {{The
  Power}} of {{Dynamic Analysis}}. In: Proceedings of the 2014 {{ACM
  International Conference}} on {{Object Oriented Programming Systems
  Languages}} \& {{Applications}}. pp. 579--598. {{OOPSLA}} '14, {ACM}, {New
  York, NY, USA} (2014). \doi{10.1145/2660193.2660202}

\bibitem{schmitzConservativeAmbiguityDetection2007}
Schmitz, S.: Conservative {{Ambiguity Detection}} in {{Context}}-{{Free
  Grammars}}. In: Arge, L., Cachin, C., Jurdzi{\'n}ski, T., Tarlecki, A. (eds.)
  Automata, {{Languages}} and {{Programming}}. pp. 692--703. Lecture {{Notes}}
  in {{Computer Science}}, {Springer Berlin Heidelberg} (2007)

\bibitem{schwerdfegerVerifiableCompositionDeterministic2009}
Schwerdfeger, A.C., Van~Wyk, E.R.: Verifiable {{Composition}} of
  {{Deterministic Grammars}}. In: Proceedings of the 30th {{ACM SIGPLAN
  Conference}} on {{Programming Language Design}} and {{Implementation}}. pp.
  199--210. {{PLDI}} '09, {ACM}, {New York, NY, USA} (2009).
  \doi{10.1145/1542476.1542499}

\bibitem{scottGLLParsing2010}
Scott, E., Johnstone, A.: {{GLL Parsing}}. Electronic Notes in Theoretical
  Computer Science  \textbf{253}(7),  177--189 (Sep 2010).
  \doi{10.1016/j.entcs.2010.08.041}

\bibitem{sudkampLanguagesMachinesIntroduction1997}
Sudkamp, T.A.: Languages and {{Machines}}: {{An Introduction}} to the
  {{Theory}} of {{Computer Science}}. {Addison-Wesley Longman Publishing Co.,
  Inc.}, {Boston, MA, USA} (1997)

\bibitem{vanwykSilverExtensibleAttribute2010}
Van~Wyk, E., Bodin, D., Gao, J., Krishnan, L.: Silver: {{An}} extensible
  attribute grammar system. Science of Computer Programming  \textbf{75}(1),
  39--54 (Jan 2010). \doi{10.1016/j.scico.2009.07.004}

\bibitem{vanwykContextawareScanningParsing2007}
Van~Wyk, E.R., Schwerdfeger, A.C.: Context-aware {{Scanning}} for {{Parsing
  Extensible Languages}}. In: Proceedings of the 6th {{International
  Conference}} on {{Generative Programming}} and {{Component Engineering}}. pp.
  63--72. {{GPCE}} '07, {ACM}, {New York, NY, USA} (2007).
  \doi{10.1145/1289971.1289983}

\bibitem{webberModernProgrammingLanguages2003}
Webber, A.B.: Modern {{Programming Languages}}: {{A Practical Introduction}}.
  {Franklin, Beedle \& Associates} (2003)

\end{thebibliography}

\pagebreak
\appendix

\section{Preliminaries}
\label{app:prel}

This appendix briefly describes some of the theoretical
foundations we build upon.

\subsection{Context-Free Grammars} \label{sec:preliminaries-cfgs}

A context-free grammar (CFG) $G$ is a 4-tuple $(\NT, \T, P, S)$ where $\NT$ is a set of non-terminals; $\T$ a set of terminals, disjoint from $\NT$; $P$ a finite subset of $\NT \times (\NT \cup \T)^{*}$, i.e., a set of productions; and $S \in \NT$ the starting non-terminal.

%
A word $w \in \T^{*}$ is recognized by $G$ if there is a sequence of steps starting with $S$ and ending with $w$, where each step replaces a single non-terminal using a production in $P$. Such a sequence is called a \emph{derivation}. The set of words recognized by $G$ is written $L(G)$


The standard definition of ambiguity, given a context-free grammar $G$, is expressed in terms of \emph{leftmost derivations}. A leftmost derivation is a derivation where the non-terminal being replaced is always the leftmost one.

\begin{definition}
A word $w \in L(G)$ is ambiguous if there are two distinct leftmost derivations of $w$.
\end{definition}

\subsection{Automata} \label{sec:preliminaries-automata}

A nondeterministic finite automaton (NFA) is a 5-tuple $(Q, \T, \delta, q_0, F)$ where $Q$ is a finite set of states; $\T$ a finite set of terminals; $\delta$ a transition function from $Q \times \T$ to finite subsets of $Q$; $q_0 \in Q$ an initial state; and $F \subseteq Q$ a set of final states. A successful run is a sequence of states $r_0, \ldots, r_n$ and a word $a_0\cdots a_n$ such that $r_0 = q_0$, $\forall i \in \{0, 1, \ldots, n-1\}.\ r_{i+1} \in \delta(r_i, a_i)$ and $r_n \in F$. We say that the automaton accepts the word $a_0a_1\cdots a_n$ iff there is such a successful run.

A deterministic finite automaton (DFA) has the same definition, except $\delta : Q \times \Sigma -> Q$, i.e., given a state and a symbol there is always a single state we can transition to. NFAs and DFAs have the same expressive power as regular expressions, i.e., for every regular expression there is an NFA and a DFA both reconizing the same language, and vice-versa. 

A pushdown automaton extends a finite automaton with a stack to/from which transitions can push/pop symbols. Formally, a nondeterministic pushdown automaton is a 6-tuple $(Q, \T, \Gamma, \delta, q_0, F)$ where $Q$ is a finite set of states; $\T$ a finite set of input symbols, i.e., an input alphabet; $\Gamma$ a finite set of stack symbols, i.e., a stack alphabet; $\delta$ a transition function from $Q \times (\T \cup \{\lambda\}) \times (\Gamma \cup \{\lambda\}$ to finite subsets of $Q \times (\Gamma \cup \{\lambda\}$; $q_0 \in Q$ the initial state; and $F \subseteq Q$ a set of final states. $\lambda$ essentially means ``ignore'', i.e., $\delta(q_1, \lambda, \lambda) = \{(q_2, \lambda)\}$ means: transition from state $q_1$ without consuming an input symbol (the first $\lambda$) and without examining or popping from the current stack (the second $\lambda$), to state $q_2$ without pushing a new symbol on the stack (the third $\lambda$).

A successful run is now a sequence of \emph{configurations}, elements of $Q \times \Gamma^{*}$, starting with $(q_0, \epsilon)$, ending with $(f, \gamma)$ for some $f \in F$ and $\gamma \in \Gamma^{*}$.

However, in this paper we only consider pushdown automata with relatively limited stack manipulation, and thus use some convenient shorthands:

\begin{itemize}
\item $p \xrightarrow{a} q$, a transition that recognizes the terminal $a$ and does not interact with the stack at all, i.e., $\delta(p, a, \lambda) \supseteq \{(q, \lambda)\}$.
\item $p \xrightarrow{a, +g} q$, a transition that recognizes the terminal $a$ and pushes the symbol $g$ on the stack, i.e., $\delta(p, a, \lambda) \supseteq \{(q, g)\}$.
\item $p \xrightarrow{a, -g} q$, a transition that recognizes the terminal $a$ and pops the symbol $g$ from the stack, i.e., $\delta(p, a, g) \supseteq \{(q, \lambda)\}$.
\end{itemize}

\subsection{Visibly Pushdown Languages} \label{sec:preliminaries-vpls}

A visibly pushdown language \cite{alurVisiblyPushdownLanguages2004} is a language that can be recognized by a visibly pushdown automaton. A visibly pushdown automaton (VPDA) is a pushdown automaton where the input alphabet $\T$ can be partitioned into three disjoint sets $\T_c$, $\T_i$, and $\T_r$, such that all transitions in the automaton have one of the following three forms:

\begin{itemize}
\item $p \xrightarrow{c, +s} q$, where $c \in \T_c$ and $s \in \Gamma$; or
\item $p \xrightarrow{i} q$, where $i \in \T_i$; or
\item $p \xrightarrow{r, -s} q$, where $r \in \T_r$ and $s \in \Gamma$,
\end{itemize}

\noindent i.e., the terminal recognized by a transition fully determines the change to the stack height.
The names of the partitions stem from their original use in program analysis, $c$ is for \emph{call}, $i$ for \emph{internal}, and $r$ for \emph{return}.
This partitioning gives us the following particularly relevant properties:

\begin{itemize}
\item Visibly pushdown languages with the same input partitions
  are closed under intersection, complement, and union
  \cite{alurVisiblyPushdownLanguages2004}. Intersection is given
  by a product automaton. Given a pair of VPDAs
  $(Q_1, \T, \delta_1, q_0, F_1)$ and
  $(Q_2, \T, \delta_2, q'_0, F_2)$ their product automaton has the
  form $(Q_1 \times Q_2, \T, \delta', (q_0, q'_0),$
  $ F_1 \times F_2)$ where, when $c \in \T_c$, $i \in \T_i$ and
  $r \in \T_r$:
  \[
  \begin{array}{l}
    \delta'((p_1, p_2), c, \lambda) =\\ \qquad\{((q_1, q_2), (g_1, g_2)) \mid (q_1, g_1) \in \delta_1(p_1, c, \lambda), (q_2, g_2) \in \delta_2(p_2, c, \lambda) \}\\
    \delta'((p_1, p_2), i, \lambda) =\\ \qquad \{((q_1, q_2), \lambda) \mid (q_1, \lambda) \in \delta_1(p_1, i, \lambda), (q_2, \lambda) \in \delta_2(p_2, i, \lambda) \}\\
    \delta'((p_1, p_2), r, (g_1, g_2)) =\\ \qquad \{((q_1, q_2), \lambda) \mid (q_1, \lambda) \in \delta_1(p_1, r, g_1), (q_2, \lambda) \in \delta_2(p_2, r, g_2) \}\\
  \end{array}
  \]

\item A VPDA can be trimmed \cite{caralpTrimmingVisiblyPushdown2015}, i.e., modified in such a way that all remaining states and transitions are part of at least one successful run; none are redundant. Furthermore, a successful run in the trimmed automaton corresponds to exactly one successful run in the original automaton, and vice-versa.
\end{itemize}

\subsection{Unranked Regular Tree Grammars} \label{sec:preliminaries-trees}

Trees generalize words by allowing each terminal to have multiple ordered successors, instead of just zero or one. Most literature considers \emph{ranked} tree languages, where each terminal has a fixed arity, i.e., the same terminal must always have the same number of successors. This is as opposed to \emph{unranked} tree languages, where the arity of a terminal is not fixed. The sequence of successors to a single terminal in an unranked tree tends to be described by a word language (referred to as a horizontal language in \cite{comonTreeAutomataTechniques2007}), often a regular language.

The results and properties presented in this paper are more naturally described through unranked trees, thus all references to trees here are to unranked trees, despite ranked being more common in the literature. We further distinguish terminals used solely as leaves from terminals that may be either nodes or leaves. Since we will use unranked trees to represent parse trees, the former will represent terminals from the parsed word, while the latter represent terminals introduced as internal nodes.

An unranked tree grammar $T$ is a tuple $(\NT, \T, X, P, S)$ where:

\begin{itemize}
\item $\NT$ is a set of (zero-arity) non-terminals.
\item $\T$ is a set of zero-arity terminals, used as leaves.
\item $X$ is a set of terminals without fixed arity, used as inner nodes or leaves.
\item $P$ is a set of productions, a finite subset of $\NT \times X \times \regex(\T \cup X)$. We will write a production $(N, x, r)$ as $N -> x(r)$.
\end{itemize}

\noindent A tree $t$ (containing only terminals from $\T$ and $X$) is recognized by $T$ if there is a sequence of steps starting with $S$ and ending with $t$, where each step either replaces a single non-terminal using a production in $P$, or replaces a regular expression $r$ with a sequence in $L(r)$. The set of trees recognized by $T$ is written $L(T)$\footnote{Again, to distinguish from regular expressions and context-free languages, all trees will be named $T$, possibly with a subscript.}. Finally, $yield : L(T) -> \T^{*}$ is the sequence of terminals $a \in \T$ obtained by a left-to-right\footnote{Preorder, postorder, or inorder does not matter since terminals in $\T$ only appear as leaves} traversal of a tree. Informally, it is the flattening of a tree after all internal nodes have been removed.

\end{document}